\newtheorem{definition}{Definition}[section]
\newtheorem{theorem}[definition]{Theorem}
\newtheorem{lemma}[definition]{Lemma}
\newtheorem{corollary}[definition]{Corollary}
\newtheorem{proposition}[definition]{Proposition}
\newtheorem{example}[definition]{Example}
\title{Harmony in the Light of Computational Ludics\thanks{The authors thank the anonymous reviewers for their valuable comments on the early version of this paper. Alberto Naibo's work is partially supported by the ANR project \textit{PROGRAMme} (ANR-17-CE38-0003-01), by the ANR-DFG project \textit{FFIUM} (ANR-17-FRAL-0003) and by the ANR project \textit{GoA} (ANR-20-CE27-0004). Yuta Takahashi's work is supported by JSPS KAKENHI Grant Number JP21K12822. Part of the present work has been prepared during Yuta Takahashi's stay at the IHPST as a postdoctoral researcher, with the support of JSPS (Japan Society for the Promotion of Science) Overseas Research Fellowship (July 2019--June 2021).}}
\author{Alberto Naibo
\institute{IHPST (UMR 8590), Universit\'{e} Paris 1 Panth\'{e}on-Sorbonne, CNRS\\Paris, France}
\email{alberto.naibo@univ-paris1.fr}
\and
Yuta Takahashi
\institute{Ochanomizu University\\Tokyo, Japan}
\email{takahashi.yuta@is.ocha.ac.jp}
}
\newcommand{\Gam}{\Gamma}
\newcommand{\Del}{\Delta}
\newcommand{\uhr}{\upharpoonright}
\newcommand{\bd}[1]{\mathbf{#1}}
\newcommand{\fk}[1]{\mathfrak{#1}}
\newcommand{\otv}[1]{[\![ #1 ]\!]}
\newcommand{\val}[1]{\vert #1 \vert}
\newcommand{\rwt}{\longrightarrow}
\newcommand{\mall}{\mathbf{MALL}}
\newcommand{\mallp}{\mathbf{MALLP}}
\newcommand{\ssse}{\sqsubseteq}
\newcommand{\cA}{\mathcal{A}}
\newcommand{\cD}{\mathcal{D}}
\newcommand{\sC}{\mathsf{C}}
\newcommand{\fD}{\mathfrak{D}}
\newcommand{\fE}{\mathfrak{E}}
\newcommand{\cT}{\mathcal{T}}
\newcommand{\fv}[1]{\mathsf{fv}(#1)}
\newcommand{\np}[1]{\mathsf{np}(#1)}
\newcommand{\ov}[1]{\overline{#1}}
\newcommand{\view}[1]{\ulcorner #1 \urcorner}
\newcommand{\viewd}[1]{\mathbb{V}( #1 )}
\newcommand{\antiview}[1]{\llcorner #1 \lrcorner}
\newcommand{\dual}[1]{\widetilde{ #1 }}
\newcommand{\iseq}[2]{\langle #1 \leftarrow #2 \rangle}
\newcommand{\Cut}[2]{\mathfrak{Cut}_{ #1 | #2 }}
\begin{document}
\maketitle

\begin{abstract}
  Prawitz formulated the so-called inversion principle as one of the characteristic features of Gentzen's intuitionistic natural deduction. In the literature on proof-theoretic semantics, this principle is often coupled with another that is called the recovery principle. By adopting the Computational Ludics framework, we reformulate these principles into one and the same condition, which we call the harmony condition. We show that this reformulation allows us to reveal two intuitive ideas standing behind these principles: the idea of ``containment" present in the inversion principle, and the idea that the recovery principle is the ``converse" of the inversion principle. We also formulate two other conditions in the Computational Ludics framework, and we show that each of them is equivalent to the harmony condition.
\end{abstract}

\section{Introduction}\label{intro}
This paper aims to study some of the characteristic features of the so-called \textit{proof-theoretic semantics} within the framework of Computational Ludics. Generally, the main objective of proof-theoretic semantics is to explain the meaning of linguistic expressions in terms of proof-conditions rather than truth-conditions, which are typical of referentialist semantics. In particular, by taking inspiration from the Brouwer-Heyting-Kolmogorov explanation of logical connectives, proof-theoretic semantics rests on the idea that we know the meaning of a compound sentence when we know what counts as a \textit{canonical proof} of it. And if proofs are formalised within the framework of natural deduction, then a canonical proof of a sentence $A$ is nothing but a closed derivation ending with an introduction rule of the main connective of $A$.\footnote{Following \cite[\S~1.3]{SchroederHeister18}, we say that a derivation in natural deduction is closed if it has no open assumption, otherwise we say that it is open.} 
The introduction rules play then a privileged role in fixing the meaning of a certain connective. It is in this sense that we should understand Gentzen's remark, according to which the introduction rules of a connective represent, as it were, the ``definitions'' of this connective, while the elimination rules of such a connective are nothing but the ``consequences'' of these definitions (see \cite[p.~80]{Gentzen69}). 

However, according to Prawitz, the words ``definition'' and ``consequence''  are used here only in a sort of metaphorical way \cite[p.~33, f.n.~1]{prawitz1965}. To assign a more precise sense to Gentzen's remark, Prawitz formulated the so-called \textit{inversion principle}. The idea behind this principle is that an elimination rule $E$ of a certain connective should essentially behave as the ``inverse" of the corresponding introduction rule(s), in the sense that by an application of $E$ one simply ``restores what had already been established if the major premise of the application was inferred by an application of an introduction rule'' \cite[p.~33]{prawitz1965}. 

We can explain how this idea is used by Prawitz to better specify Gentzen's remark by means of an example. Consider the introduction ($I$) and elimination ($E$) rules for the implication ($\to$):
\begin{center}
{\small
$\infer[ {\scriptstyle \to_I \; n} ]{A \to B}{
  \infer*[d]{B}{\stackrel{n}{A}}
}
\qquad\qquad
\infer[ {\scriptstyle \to_E } ]{B}{
  A \to B
  &
  A
}$
}
\end{center}
If we accept that the $\to_{I}$ rule defines the connective $\to$, in the sense that it determines the meaning of $\to$, then by stating $A \to B$ we should not be allowed to deduce anything \textit{more} than what we can already obtain from the sub-derivation $d$. Otherwise, the meaning of $A \to B$ would be more informative than what is stipulated by the $\to_{I}$ rule. It is in this sense that we should understand Prawitz's characterisation of the inversion principle in terms of ``containment'': the premise of an introduction rule of certain connective $\texttt{c}$ already contains all of the information that is required to obtain the conclusion of the corresponding elimination rule. The application of the elimination rule is thus dispensable when its major premise is the conclusion of an introduction rule. In the case of implication, this means that
\begin{center}
{\small 
$\infer[ {\scriptstyle \to_E} ]{B}{
  \infer[ {\scriptstyle \to_I \; n} ]{A \to B}{
    \infer*[d_1]{B}{\stackrel{n}{A}}
  }
  &
  \infer*[d_2]{A}{}
}
\qquad
\textnormal{can be transformed into}
\qquad
\infer*[d_1]{B}{\infer*[d_2]{[A]}{}}$
}
\end{center}
because the (open) derivation $d_{1}$ of the premise of the $\to_{I}$ rule, when combined with the derivation $d_{2}$ of the minor premise of the $\to_{E}$ rule, already ``contains" a derivation of the conclusion of $\to_{E}$ (see \cite[p.~33]{prawitz1965}). It is precisely in terms of this relation of ``containment" that one can make sense of Gentzen's remark that the elimination rules are nothing but the consequences of the introduction rules of a certain connective. The proof-transformation that we have just presented corresponds to what is usually called a (local) reduction step of a \textit{detour} (in this case, a $\to$-detour).\footnote{When Prawitz formulated the inversion principle in his monograph on natural deduction in 1965, he was unaware of Gentzen's unpublished works. In particular, Prawitz was not aware that in an unpublished version of his PhD thesis, which was only discovered in 2005 by Jan von Plato \cite{vonplato2008}, Gentzen had already defined the detours reduction steps for the rules of intuitionistic logic. It is certainly for this reason that he considered that Gentzen's remark on the relationship between introduction and elimination rules of natural deduction was stated in metaphorical terms (while we know today that Gentzen's remark was probably based on some technical results similar to those later obtained by Prawitz himself).} And if we work under the proofs-as-programs correspondence (i.e. the Curry-Howard correspondence), then a computational content can be assigned to this detour reduction step because it corresponds to a $\beta$-reduction step (preserving typing) in typed $\lambda$-calculus.  
 
Several subsequent works \cite{PfenningDavies2001,FrancezDyckhoff2012,SH2014} tried to improve Prawitz's analysis of Gentzen's remark by coupling the inversion principle with another principle, which we call here the \textit{recovery principle} by following \cite{SH2014}. As we have seen, the inversion principle corresponds to a \textit{no more} condition. However, if we want the $\to_{I}$ rule to \textit{completely} determine the meaning of the connective $\to$, then we have to impose an extra condition in addition to the one already imposed by the inversion principle. This extra condition consists in asking that by stating $A \to B$ we should not be allowed to deduce anything \textit{less} than what we can already obtain from the sub-derivation $d$. Otherwise, the meaning of $A \to B$ would be less informative than what is stipulated by the $\to_{I}$ rule. If we want to mimic Prawitz's account of the inversion principle and formulate the recovery principle in terms of ``containment", then we should do it in a fashion which looks like the ``converse" of the inversion principle itself: all of the information that is required to obtain the conclusion of an introduction rule of a certain connective $\texttt{c}$ is already contained in the derivation of the major premise of the corresponding elimination rule of $\texttt{c}$. In the case of implication, this means that 
\begin{center}
{\small
$\infer*[d]{A \to B}{}
\qquad
\textnormal{can be transformed into}
\qquad
\infer[ {\scriptstyle \to_I \; n} ]{A \to B}{
  \infer[ {\scriptstyle \to_E} ]{B}{
    \infer*[d]{A \to B}{}
    &
    \stackrel{n}{A}
  }
}$
}
\end{center}
The idea is that given a derivation $d$ of the major premiss of the $\to_{E}$ rule, one can extract from it everything that is required to apply the corresponding introduction rule $\to_{I}$.\footnote{The situation is more complicated for a disjunction. The particular format of its elimination rule means that to obtain $A \vee B$ from a derivation \AxiomC{$d$}\noLine\UnaryInfC{$A \vee B$}\DisplayProof, one first has  to apply a $\vee_{I}$ rule to obtain $A \vee B$ in each of the minor premisses of $\vee_{E}$, and then apply $\vee_{E}$ itself to discharge the open assumptions present in the derivations of the two minor premisses. Thus, in contrast to the case of $\to$, the elimination rule is applied after the introduction rule(s) and not before. The same remark holds for the existential quantifier. One of the advantages of the Ludics framework is that we can have a homogeneous treatment of the recovery principle and we do not depend on the format of the elimination rule of the connective under analysis.} As noted in \cite{PfenningDavies2001,NP2015}, under the proofs-as-programs correspondence, this transformation can be seen as an $\eta$-expansion step in typed $\lambda$-calculus.

When taken together, the inversion principle and the recovery principle guarantee a balance between the introduction and the elimination rules of a certain connective: the elimination rules are no more and no less informative than the introduction rules. Consequently, by borrowing a terminology introduced by Dummett, some authors say that the rules which satisfy both the inversion and the recovery principle are \textit{harmonious} (see \cite{FrancezDyckhoff2012}). Moreover, as we already mentioned, asking for both the inversion and the recovery principle is a way of demanding that the meaning of a connective is completely determined by its inference rules (and more specifically, by its introduction rules). This means that to fix the meaning of such of a connective, we do not need to look for any specific context of where to fix the reference (or the denotation) of this connective. Traditionally, a linguistic expression whose meaning is independent, and thus invariant, from any referential (or denotational) context is usually identified with what we call a \textit{logical constant}. It is for this reason that harmony is considered to play not only the role of a meaning criterion but also of a logicality criterion (see \cite[pp.~286--287]{Dummett91}).

We propose here a way to clarify both of (1) the idea of ``containment" present in Prawitz's inversion principle, and of (2) the idea that the recovery principle plays the role of the ``converse" of the inversion principle. These ideas are still informal, so one should provide each of them with a precise sense. For this purpose, we study the notion of harmony from the point of view of Girard's \textit{Ludics} \cite{girard2001}. More precisely, we adopt here Terui's \textit{Computational Ludics} \cite{terui2011}, because its $\lambda$-calculus-style syntax is particularly useful for our purpose. In Section \ref{harm}, we make ideas (1) and (2) more precise by reformulating the notion of harmony within Computational Ludics (see Definition \ref{harmony}). Consequently, we will \textit{generalise} the notion of connective in Computational Ludics and consider not only the ``good" (i.e. meaningful/logical) connectives satisfying our notion of harmony but also the ``bad" (i.e. non-meaningful/non-logical) connectives not satisfying it.\footnote{Our generalisation allows us to take into account connectives that do not satisfy the Ludics counterparts of the inversion principle and the recovery principle (see Example \ref{gamma} below).} In Section \ref{seccha}, we show that our notion of harmony is characterised by each of two conditions that make an essential use of the locative and interactive features proper to the Ludics approach. The first condition, which we call the \textit{dual decomposability of connectives}, is a variant of the so-called internal completeness of connectives. The second, which we call the \textit{dual decomposability of visitable paths}, is formulated in terms of the \textit{regularity} of behaviours introduced by Fouquer\'{e} and Quatrini \cite{FQ2018} in Girard's Ludics and extended by Pavaux \cite{pavaux2017,pavaux2017c} to Computational Ludics.

\section{Inversion and Recovery Principles in Computational Ludics}\label{harm}
In Section \ref{dertodes}, we give the basic definitions of Terui's Computational Ludics by following \cite{terui2011,BT2010,pavaux2017c}. Next, in Section \ref{combeh}, our notion of connective is introduced. Then, in Section \ref{reformham}, we will reformulate the inversion and recovery principles as the harmony condition in Computational Ludics.

\subsection{From Derivations to Computational Designs}\label{dertodes}
\textit{Computational designs} (in short, \textit{c-designs}) are the basic entities of Computational Ludics. They can be understood as abstract sequent derivations because the designs in Girard's original Ludics are such entities (for Girard's designs, see \cite{girard2001,curien2005,NaiboMattiaSeiller2016}). Consider the following procedure to extract an abstract sequent derivation from the leftmost derivation (note that the leftmost derivation contains the rule \textit{Daimon} $\maltese$, which enables one to deduce any sequent):
\begin{center}
{\footnotesize
$\infer[\with \;\; \stackrel{(1)}{\hookrightarrow} \quad]{ \vdash ( A \oplus ( B \parr C ) ) \with D }{
  \infer[\oplus]{ \vdash A \oplus ( B \parr C ) }{
    \infer[\parr]{ \vdash B \parr C }{
      \infer[\maltese]{ \vdash B , C }{}
    }
  }
  &
  \infer[\maltese]{ \vdash D }{}
}
\infer[\;\; \stackrel{(2)}{\hookrightarrow} \quad]{ ( - , ( A \oplus ( B \parr C ) ) \with D , \{ \{ A \oplus ( B \parr C ) \} , \{ D \} \} ) }{
  \infer{ ( + , A \oplus ( B \parr C ) , \{ B \parr C \} )  }{
    \infer{ ( - ,  B \parr C , \{ \{ B,C\} \} ) }{
      \maltese
    }
  }
  &
  \maltese
}
\infer{ ( - , \xi , \{ \{ 1 \} , \{ 2 \} \} ) }{
  \infer{ ( + , \xi 1 , \{ 2 \} )  }{
    \infer{ ( - ,  \xi 12 , \{ \{ 1,2 \} \} ) }{
      \maltese
    }
  }
  &
  \maltese
}$
}
\end{center}

In step (1), we encode the information on the rules' applications into triples: for example, the triple $( - , ( A \oplus ( B \parr C ) ) \with D , \{ \{ A \oplus ( B \parr C ) \} , \{ D \} \} )$ indicates that $\vdash ( A \oplus ( B \parr C ) ) \with D$ is inferred from the two premises $\vdash A \oplus ( B \parr C )$ and $\vdash D$ by a negative (i.e. reversible) rule. On the other hand, the triple $( + , A \oplus ( B \parr C ) , \{ B \parr C \} )$ indicates that $\vdash A \oplus ( B \parr C )$ is inferred from $\vdash B \parr C$ by a positive (i.e. irreversible) rule. In  step (2), we abstract the information on the positions (i.e. the \textit{locations}) of the formulas by omitting the information on their contents. The formula $( A \oplus ( B \parr C ) ) \with D$ is replaced with its location $\xi$; its first immediate subformula is denoted by its location $\xi 1$, where $1$ indicates that it is the location of the first immediate subformula, and so on. The triples and the symbol $\maltese$ in the rightmost derivation above can be considered to be actions performed to construct this derivation in the bottom-up way as in proof-search. These abstract sequent derivations are called \textit{designs as desseins} in \cite{girard2001} and \textit{untyped proofs} in \cite{NaiboMattiaSeiller2016}. These abstract sequent derivations can have infinitely long branches because one does not consider formulas anymore and so one can keep on decomposing a location $\xi$ at infinity. Moreover, they can also have infinite width (i.e. infinite branching) because there are infinitary many actions that can be applied to a location $\xi$.

C-designs can be treated as abstract sequent derivations that are expressed in the style of generalised infinite $\lambda$-terms (for the precise definition of c-designs, see Definition \ref{cdesigns} below). We will take a signature $\cA = ( A , \mathsf{ar} )$ which is a pair of a set $A$ of names and a mapping $\mathsf{ar}$ that assigns an arity to each name $a \in A$. Then, we consider positive actions and negative actions corresponding to positive triples $(+ , \xi , I )$ and negative triples $(- , \xi ', \mathcal{N} )$, respectively. In contrast to Girard's designs, negative actions in c-designs include variable binding to obtain a generalisation of $\lambda$-abstraction. However, the notion of c-design preserves a fundamental feature of Girard's Ludics: the absence of any essential distinction between syntactic and semantic level, i.e. between derivations and models (see e.g. \cite{terui2011,BT2010}). As we have just seen, c-designs are abstract sequent derivations possibly with infinite branches and infinite widths. They are thus suitable to work not only as derivations but also (counter-)models, as far as one considers infinite trees in extracting models from proof-search failure.\footnote{Note that the opposition between finite derivations and infinite (counter-)models disappears when logical systems satisfying the finite model property are considered. This is what happens, for instance, in the case of multiplicative additive linear logic $\mall$, as remarked in \cite[p.~2]{BT2010}.}

On the basis of these explanations, we provide a precise definition of c-designs. Let $\mathcal{V}$ be a countably infinite set of variables. As stated above, a \textit{signature} $\cA = (A , \mathsf{ar} )$ is a pair of a set $A$ of names and a mapping $\mathsf{ar}$ assigning an arity to each name $a \in A$. The set of \textit{positive actions} consists of Daimon $\maltese$, Divergence $\Omega$ and proper positive actions $\ov{a}$ for any $a \in A$. The set of \textit{negative actions} consists of all variables in $\mathcal{V}$ and proper negative actions $a ( x_1 ,\ldots , x_n )$ for any $a \in A$ with $\mathsf{ar}(a) = n$ and any distinct $x_1 ,\ldots , x_n \in \mathcal{V}$. We often abbreviate a proper negative action $a ( x_1 ,\ldots , x_n )$ as $a (\vec{x}_a)$. Let $\cT$ be the set of possibly non-well founded labelled trees such that (1) each of their nodes is labelled with either $\maltese$, $\Omega$, a proper positive action $\ov{a}$, a variable $x$ or an $A$-indexed set $\{ a (\vec{x}_a )\}_{a \in A}$ of proper negative actions, and (2) each of their edges is labelled with a natural number or a name.
\begin{definition}[Computational Designs]\label{cdesigns}
  The set $\cD^+$ of positive c-designs and the set $\cD^-$ of negative c-designs are defined as the largest subsets of $\cT$ satisfying the following conditions.
  \begin{itemize}
  \item If $P \in \cD^+$ holds then $(1)$ $P$ is a node labelled with $\maltese$, or $(2)$ $P$ is a node labelled with $\Omega$, or $(3)$ $P$ is of the form
      \[
      \begin{xy}
        (0,0)*+{\ov{a}}="A", (-20,10)*+{N_0}="B", (20,10)*+{N_k}="C", (0,10)*+{\cdots}="D",
        \ar @{-}"A";"B"^{0} \ar @{-}"C";"A"^{k} \ar @{-}"A";"D"
      \end{xy}
      \]
      with $\mathsf{ar}(a) = k$ and $N_0 ,\ldots ,N_k \in \cD^-$. We denote this tree by $N_0 | \ov{a} \langle N_1 ,\ldots , N_k \rangle$.

  \item If $N \in \cD^-$ holds then $(1)$ $N$ is a node labelled with a variable $x$, or $(2)$ $N$ is a tree of the form
      \[
      \begin{xy}
        (0,0)*+{\{ a (\vec{x}_a) \}_{a \in A}}="A", (-20,10)*+{\cdots}="B", (0,10)*+{P_a}="D", (20,10)*+{\cdots}="C", (30,10)*+{(a \in A)},
        \ar @{-}"A";"B" \ar @{-}"A";"C" \ar @{-}"A";"D"^a
      \end{xy}
      \]
      such that it has $\val{ A }$ immediate subtrees $\{ P_a \}_{a \in A}$ and $P_a \in \cD^+$ holds for any $a \in A$, where $\val{ A }$ denotes the cardinality of $A$. We denote this tree by $\sum a(\vec{x}_a).P_a$, and we stipulate that the variables $\vec{x}_a$ are bound in this tree.
  \end{itemize}
  Define $\cD := \cD^+ \cup \cD^-$. A subdesign of a c-design $T$ is a subtree of $T$.
\end{definition}

As explained in \cite{terui2011}, $\sum a(\vec{x}_a).P_a$ is the additive superimposition of positive c-designs $\{ P_a \}_{a \in A}$ and so, for instance, the value $\otv{ \sum a(\vec{x}_a).P_a }$ of the normal form function of $\sum a(\vec{x}_a).P_a$ is equal to $\sum a ( \vec{x}_a ) . \otv{ P_a }$ (for the definition of the normal form function, see Definition \ref{nff}). We denote positive c-designs by $P,Q$, negative c-designs by $M,N$ and positive or negative c-designs by $T,U$ possibly with suffixes. Following \cite{pavaux2017c}, we adopt Barendregt's variable condition: no variable occurs both as a free one and as a bound one in a c-design, and all bound variables in a c-design are distinct. Moreover, two $\alpha$-equivalent c-designs are identified (for the definition of $\alpha$-equivalence on c-design, see \cite[Definition 2.5]{terui2011}).

Divergence $\Omega$ allows one to express partially branching c-designs: when $K$ is a subset of $A$ and $\{ P_a \}_{a \in K}$ is a $K$-indexed family of positive c-designs, we denote by $\sum_K a ( \vec{x}_a ).P_a$ the negative c-design $\sum a ( \vec{x}_a ). Q_a$ such that $Q_a = P_a $ if $a \in K$, and $Q_a = \Omega$ otherwise. If $K$ is a finite set $\{ a_1 , \ldots , a_n \}$ we then  write $a_1 ( \vec{x}_{a_1} ). P_{a_1} + \cdots + a_n ( \vec{x}_{a_n} ). P_{a_n}$ instead of $\sum_K a ( \vec{x}_a ).P_a$. In particular, we write $a ( \vec{x}_a ). P_a$ if $K = \{ a \}$. Similar notations $\sum_{\alpha} a ( \vec{x}_a ).P_a$ are used for a set $\alpha$ of negative actions such that for any distinct $a ( \vec{x}_a ) , b ( \vec{x}_b ) \in \alpha$, $a \neq b$ holds. If we include the unary name $\lambda$ in $A$ and denote the positive action $\ov{\lambda}$ by $@$, then we have the $\lambda$-abstraction $\lambda ( x ). P$ and the $\lambda$-application $M | @ \langle N \rangle$ in Computational Ludics, so $\sum a ( \vec{x}_a ).P_a$ and $N_0 | \ov{a} \langle N_1 , \ldots , N_n \rangle$ are the generalised abstraction and the generalised application, respectively. The set of free variables in a c-design $T$ is denoted by $\fv{T}$. On the other hand, $T[N_1 / x_1 , \ldots , N_n / x_n ]$ denotes the c-design resulting from the simultaneous substitution of the negative c-design $N_i$ for all occurrences of $x_i$ in $T$ for each $i$, where bound variables in $T$ are renamed if necessary.

The $\lambda$-term-style syntax of c-designs enjoys some computational features which are similar to the ones of $\lambda$-calculus. We can define a notion of redex as a sort of $\beta$-redex called a \textit{cut}, and the execution of a redex is defined as the cut reduction: a c-design $T$ is a \textit{cut} if $T$ is a positive c-design of the form $( \sum a (\vec{x}_a ).P_a ) | \ov{a} \langle N_1 ,\ldots , N_k  \rangle$. The reduction rule for cuts is defined as $( \sum a (\vec{x}_a ).P_a ) | \ov{a} \langle \vec{ N }  \rangle \rwt P_a [ \vec{N} / \vec{x}_a ]$. A c-design is \textit{cut-free} if it has no cut. Note that, as in the case of $\lambda$-terms, a c-design can be treated both as a function and as a value. The known fact below (Theorem \ref{ass}) shows that we have a limited form of the confluence of cut reduction.

In addition, we use the following notions concerning the classification of c-designs. A variable $x$ occurring as $N_0 | \ov{a} \langle N_1 ,\ldots ,x, \ldots N_n \rangle$ in a c-design $T$ is called an \textit{identity in} $T$. A c-design $T$ is \textit{identity-free} if $T$ is not a variable and there is no identity in $T$. Intuitively, an identity in a c-design $T$ indicates that $T$ can be ``$\eta$-expanded" at the position of $x$ (for a detailed explanation of the notion of identity, see \cite[\S~2.1]{terui2011}). A c-design $T$ is \textit{total} if $T \neq \Omega$ holds. A c-design $T$ is \textit{linear} if for any of its subdesigns of the form $N_0 | \ov{a} \langle N_1 ,\ldots , N_k \rangle$, the sets $\fv{N_0} , \ldots , \fv{N_k}$ are pairwise disjoint. A c-design $T$ is \textit{standard} if $T$ is cut-free, identity-free, total, linear and $\fv{T}$ is finite.

We denote the reflexive and transitive closure of the reduction relation $\rwt$ by $\rwt^{\ast}$, and write $P \Downarrow Q$ if there is a c-design $Q$ such that $P \rwt^{\ast} Q$ holds and $Q$ is neither a cut nor $\Omega$, otherwise we write $P \Uparrow$. To define the normal form function on c-designs, we use the head normal form function $hnf$ and corecursion. Let $hnf : \cD \to \cD$ be the function preserving the polarity such that $hnf (N)  =  N$ for any negative c-design $N$, and if $P \Downarrow Q$ then $hnf (P)  =  Q$, otherwise $hnf ( P ) = \Omega$. By using the corecursive definition principle of functions on c-designs (for the proof of this principle, see \cite[\S~2.2]{terui2011}), we define the normal form function on c-designs as follows.
\begin{definition}[Normal Form Function on C-Designs]\label{nff}
  The normal form function $\otv{\cdot} : \cD \to \cD$ on c-designs are defined as follows:
  \begin{eqnarray*}
    \otv{P} & = & \maltese , \text{ if $P \Downarrow \maltese$,} \\
    & = & \Omega , \text{ if $P \Uparrow$,} \\
    & = & x | \ov{a} \langle \otv{N_1},\ldots , \otv{N_k} \rangle , \text{ if $P \Downarrow  x | \ov{a} \langle N_1 ,\ldots , N_k \rangle$,} \\
    \otv{N} & = & x, \text{ if $N = x$,} \\
    & = & \sum a (\vec{x}_a ) .\otv{P_a}, \text{ if $N = \sum a (\vec{x}_a ). P_a$.}
  \end{eqnarray*}
\end{definition}

When $\otv{T} = \maltese$ holds, we say that $T$ \textit{converges to} $\maltese$. We have the following limited version of confluence (for its proof, see \cite[Theorem 1.12]{BT2010}). It is limited in the sense that it implies the joinability only for the values of the normal form function: for example, we have $\otv{\otv{T}[N / x]} = \otv{T[\otv{N}/x]}$, which says that $\otv{T}[N / x]$ and $T[\otv{N} / x]$ are joinable with respect to their values of $\otv{\cdot}$.
\begin{theorem}[Associativity]\label{ass}
  For any c-design $T$ and any negative c-designs $N_1 , \ldots , N_n$, we have
  \[
  \otv{ T [N_1 / x_1 , \ldots , N_n / x_n ] } = \otv{ \otv{T} [ \otv{N_1} / x_1 , \ldots , \otv{N_n} / x_n ] }.
  \]
\end{theorem}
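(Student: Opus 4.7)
The plan is to prove the equation coinductively, exploiting the fact that $\otv{\cdot}$ is itself defined by corecursion on $\cT$. Writing $\sigma = [N_1/x_1,\ldots,N_n/x_n]$ and $\sigma' = [\otv{N_1}/x_1,\ldots,\otv{N_n}/x_n]$, the goal $\otv{T\sigma} = \otv{\otv{T}\sigma'}$ unfolds via Definition \ref{nff} into the two requirements that (i) both sides carry the same root label and (ii) they have pointwise equal immediate subtrees, at which point the coinductive hypothesis is applied to each subtree.

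First I would establish two auxiliary facts. The first is idempotence, $\otv{\otv{T}} = \otv{T}$, which follows directly from the definition together with the fact that $\otv{T}$ is cut-free. The second is a substitution-versus-reduction lemma: if $P \rwt P'$ then $P\sigma \rwt P'\sigma$; consequently, if $P \Downarrow Q$ then $P\sigma$ head-reduces to a c-design whose head shape is determined by $Q$ and by the action of $\sigma$ on the head variable of $Q$ (when there is one). The guiding idea is that each cut appearing during the evaluation of $T\sigma$ is either a pre-existing cut in $T$, which is already contracted in $\otv{T}$, or a fresh cut created by substituting some $N_i$ into a head occurrence $x_i$; in the latter case the subsequent head behaviour is controlled by $N_i$, or equivalently, by $\otv{N_i}$ thanks to idempotence, which is exactly the negative c-design supplied by $\sigma'$.

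The key simulation claim is then: for every positive $P$, the head normal forms $hnf(P\sigma)$ and $hnf(\otv{P}\sigma')$ agree at the root, i.e.\ they are both $\maltese$, both $\Omega$, or both of the form $x | \ov{a}\langle \cdot \rangle$ with the same $x$ and $a$. Granting this, from $P\sigma \Downarrow x | \ov{a}\langle M_1,\ldots,M_k\rangle$ we obtain $\otv{P}\sigma' \Downarrow x | \ov{a}\langle M'_1,\ldots,M'_k\rangle$, and the coinductive hypothesis applied pairwise to the $(M_i, M'_i)$ closes the positive case. For a negative c-design of the form $\sum a(\vec{x}_a).P_a$, Definition \ref{nff} reduces the claim to the positive case on each pair $P_a\sigma$ versus $\otv{P_a}\sigma'$; the variable, Daimon and Divergence cases are immediate.

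The main obstacle is the simulation claim itself in the infinitary setting: the head reduction of $T\sigma$ may a priori take arbitrarily many steps, and one must guarantee that $\otv{T}\sigma'$ reaches the matching head shape in finitely many steps as well (or diverges in parallel, yielding $\Omega$ on both sides). This requires a careful case analysis tracking how each fired cut in $T\sigma$ corresponds either to a cut already reducible inside $T$ alone or to a fresh cut produced by substituting $N_i$ at an identity $x_i$; in the latter case the already-normalized $\otv{N_i}$ delivered by $\sigma'$, combined with idempotence, ensures the two reductions produce the same head. Once this correspondence is in place, the theorem follows by the coinduction principle for $\cT$.
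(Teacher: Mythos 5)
The paper does not actually prove Theorem \ref{ass}: it is imported as a known result, with the proof delegated to \cite[Theorem 1.12]{BT2010}. There is therefore no in-paper argument to compare against; what can be compared is your sketch against the standard proof in the cited source, and on that score your outline is essentially the right one --- a commutation lemma between substitution and head reduction, idempotence of $\otv{\cdot}$ as an auxiliary fact, and then corecursion/coinduction on the tree structure of c-designs.

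That said, your text stops exactly where the work begins. The simulation claim, as you literally state it (the two head normal forms agree at the root), is not strong enough to close the coinductive step: to apply the coinductive hypothesis to the pairs $(M_i , M'_i)$ you must show that these pairs are again instances of the relation you are coinducting on, i.e.\ that $M_i$ is of the form $M_i^0 \sigma_0$ and $M'_i$ of the form $\otv{M_i^0}\sigma_0'$ for a substitution $\sigma_0$ that in general strictly extends $\sigma$ with bindings created during the head reduction (each fired cut $(\sum a(\vec{x}_a).P_a) \mid \ov{a}\langle \vec{N}\rangle$ introduces new substituted arguments $\vec{N}/\vec{x}_a$). So the coinductive relation must be quantified over all substitutions, and the simulation claim must track not only the root of the head normal form but the provenance of each argument subtree; this also covers the divergence case, where one must argue that $P\sigma \Uparrow$ iff $\otv{P}\sigma' \Uparrow$. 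This is precisely the ``careful case analysis'' you defer, and it is the entire content of the theorem; without it the argument is a plan rather than a proof. None of this suggests the approach would fail --- it is the approach of \cite{BT2010} --- but be aware that the paper offers no shortcut here, since it simply cites the result.
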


\subsection{The Computational Behaviour of C-Designs}\label{combeh}
Since c-designs are untyped objects, we cannot classify them with respect to their computational behavior in advance (i.e. a priori), but we can do it a posteriori by \textit{testing} them with other c-designs. For this purpose, we first define \textit{anti-designs}. Let $x_0$ be an arbitrary but fixed variable. A positive c-design $P$ is \textit{atomic} if $\fv{P} \subseteq \{ x_0 \}$ holds. A negative c-design $N$ is \textit{atomic} if $\fv{N}$ is empty.
\begin{definition}[Anti-Designs]
  $(1)$ An anti-design against positives is a finite set $\{ (x_1 , N_1 ) ,\ldots , (x_n , N_n ) \}$ of pairs of a variable $x_i$ and an atomic negative c-design $N_i$ such that $x_1 ,\ldots , x_n$ are pairwise distinct. We say that $\{ x_1 ,\ldots , x_n \}$ is the base of this anti-design. $(2)$ An anti-design against negatives is a finite set $\{P , (x_1 , N_1 ) ,\ldots , (x_n , N_n ) \}$ such that $P$ is an atomic positive c-design and $\{ (x_1 , N_1 ) ,\ldots , (x_n , N_n ) \}$ is an anti-design against positives. We say that $\{ x_1 ,\ldots , x_n \}$ is the base of this anti-design. Later on, we denote $\{ (x_1 , N_1 ) ,\ldots , (x_n , N_n ) \}$ by $[N_1 / x_1 ,\ldots , N_n / x_n ]$, and $\{P , (x_1 , N_1 ) ,\ldots , (x_n , N_n ) \}$ by $[P, N_1 / x_1 ,\ldots , N_n / x_n ]$.
\end{definition}

The \textit{orthogonality}, which we define below, provides a way for testing c-designs with other c-designs: if a c-design $T$ is orthogonal to an anti-design against it, this means that $T$ passes the test in terms of this anti-design.
\begin{definition}[Orthogonality]
  $(1)$ A positive c-design $P$ and an anti-design $[G] = [N_1 / x_1 ,\ldots , N_n / x_n]$ against positives are orthogonal if and only if $P[N_1 / x_1 ,\ldots , N_n / x_n]$ is closed and converges to $\maltese$. $(2)$ A negative c-design $M$ and an anti-design $[G] = [P , N_1 / x_1 ,\ldots , N_n / x_n]$ against negatives are orthogonal if and only if $P[M[N_1 / x_1 ,\ldots , N_n / x_n] / x_0]$ is closed and converges to $\maltese$. When a c-design $T$ and an anti-design $[G]$ are orthogonal, we write $T \bot [G]$.
\end{definition}

An anti-design is \textit{cut-free} (resp. \textit{standard}) if any c-design contained in it is cut-free (resp. standard). When $\bd{T}$ is a set of cut-free c-designs of the same polarity and $\bd{G}$ is a set of cut-free anti-designs of the same polarity, then we define as follows:
\begin{itemize}
\item If all c-designs in $\bd{T}$ are atomic, $\bd{T}^{\bot}$ is the set of all standard and atomic c-designs $U$ with $T \bot U$ for any $T \in \bd{T}$, where for any atomic $P$ and $N$, $P \bot N :\Leftrightarrow N \bot P :\Leftrightarrow P[N / x_0]$ is closed and converges to $\maltese$. Otherwise, $\bd{T}^{\bot} := \{ [G] : \text{$[G]$ is standard and $T \bot [G]$ holds for any $T \in \bd{T}$} \}$.

\item $\bd{G}^{\bot} := \{ T : \text{$T$ is standard and $T \bot [G]$ holds for any $[G] \in \bd{G}$} \}$.
\end{itemize}
The reason why we restrict the elements of $\bd{T}^{\bot}$ and $\bd{G}^{\bot}$ to cut-free ones, in particular, standard ones is that we define \textit{behaviours} as sets of standard c-designs. Note that one may define behaviours as sets of l-designs, following \cite{terui2011}: an \textit{l-design} is an identity-free, total and linear c-design with finitely many free variables. By defining behaviour as sets of standard c-designs, i.e. sets of cut-free l-designs as in \cite{BST2010,BT2010}, one can simplify some formulations concerning behaviours and we have adopted this approach. In our case, we can make the characterisations of the harmony condition in Section \ref{seccha} simpler because a positive c-design in a behaviour is always of the form $x | \ov{a} \langle N_1 , \ldots , N_k \rangle$ due to its cut-freeness.
\begin{definition}[Behaviours]
  A set $\bd{T}$ of standard c-designs of the same polarity is a behaviour if and only if $\bd{T} = \bd{T}^{\bot\bot}$ holds. We say a behaviour $\bd{T}$ is an a-behaviour if all c-designs in $\bd{T}$ are atomic.
\end{definition}

Behaviours correspond to types inhabited by c-designs, and they give a classification of c-designs (in fact, standard c-designs) in terms of the orthogonality or tests: a standard c-design $T$ belongs to a behaviour $\bd{B}$ iff $T \bot [G]$ holds for any $[G] \in \bd{B}^{\bot}$. Note that if $\bd{B}$ is an a-behaviour, then $\bd{B}^{\bot}$ is also an a-behaviour.

Since behaviours correspond to types, a connective in Computational Ludics applies to behaviours, and returns a new behaviour. Our notion of connective is defined as follows:
\begin{definition}[Connectives]
  An $n$-\textit{ary connective} $\alpha$ is a triple $( \vec{z} ,  \alpha^I , \alpha^E )$ of a finite sequence $\vec{z}$ of variables and two finite sets of negative actions $\alpha^I , \alpha^E$ satisfying the following three conditions.
  \begin{itemize}
  \item The finite sequence $\vec{z}$ consists of $n$ distinct variables $z_1 ,\ldots , z_n$ with $x_0 \not\in \{ z_1 ,\ldots , z_n \}$.

  \item The union $\alpha^I \cup \alpha^E$ is a set $\{ a_1 ( \vec{x}_1 ) , \ldots , a_{m} ( \vec{x}_{m} ) \}$ of negative actions such that $a_1  , \ldots , a_{m}$ are pairwise distinct names and for each $i$ with $1 \leq i \leq m$, there is a natural number $k$ and indices $(i , 1) ,\ldots , (i , k)$ with $\{ \vec{ x }_i \} =  \{ x_{( i , 1)} ,\ldots , x_{(i , k)} \} \subseteq \{ \vec{z} \}$. We denote $\{ x_{(i,1)} ,\ldots , x_{(i,k)} \}$ by $X_i$.

  \item The union $\bigcup_{1 \leq i \leq m} X_i$ is equal to $\{ \vec{z} \}$.
  \end{itemize}
  We stipulate that each variable in $\vec{z}$ is bound in a connective $( \vec{z} ,  \alpha^I , \alpha^E )$.
\end{definition}

As c-designs, two $\alpha$-equivalent connectives are identified (e.g. $( x , y , \{ a ( x ) \} , \{ b ( y , x ) \} )$ is identified with $( v , z , \{ a ( v ) \} , \{ b ( z , v ) \} )$). The reason why we imposed the condition $x_0 \not\in \{ z_1 ,\ldots , z_n \}$ is that, informally speaking, we want to keep $x_0$ to be the location of atomic c-designs. This is useful for several of the formulations that follow; in particular, the formulation of game-semantic framework of \cite{pavaux2017,pavaux2017c} in Section \ref{ddv}.

A connective $\alpha = ( \vec{z} , \alpha^I , \alpha^E )$ gives the abstract information for obtaining a set of introduction and elimination rules. In general, $\alpha^I$ in a connective $\alpha = ( \vec{z} , \alpha^I , \alpha^E )$ determines the rule for $\alpha$ which constructs a c-design from an $\alpha^I$-indexed family of positive c-designs as in the leftmost tree below. On the other hand, the set $\alpha^E$ determines $\val{ \alpha^E }$ rules, where $\val{ \alpha^E }$ denotes the cardinality of the set $\alpha^E$, via the positive actions $\ov{a_1} ,\ldots , \ov{a_n}$ corresponding to the negative actions in $\alpha^E = \{ a_1 ( \vec{x}_1 ) ,\ldots , a_n ( \vec{x}_n ) \}$. These rules are expressed as the remaining trees below:
\[
\begin{xy}
  (0,0)*+{\{ a (\vec{x}_a) \}_{a ( \vec{x}_a ) \in \alpha^I}}="A", (-20,10)*+{\cdots}="B", (0,10)*+{P_a}="D", (20,10)*+{\cdots}="C",
  \ar @{-}"A";"B" \ar @{-}"A";"C" \ar @{-}"A";"D"^a
\end{xy}
\qquad
\begin{xy}
  (0,0)*+{\ov{a_1}}="A", (-10,10)*+{N_0}="B", (20,10)*+{N_{(1, k_1)}}="C", (0,10)*+{N_{(1,1)}}="D", (10,10)*+{\cdots}
  \ar @{-}"A";"B" \ar @{-}"C";"A" \ar @{-}"A";"D"
\end{xy}
\;
\cdots
\;
\begin{xy}
  (0,0)*+{\ov{a_n}}="A", (-10,10)*+{N_0}="B", (20,10)*+{N_{(n, k_n)}}="C", (0,10)*+{N_{(n,1)}}="D", (10,10)*+{\cdots}
  \ar @{-}"A";"B" \ar @{-}"C";"A" \ar @{-}"A";"D"
\end{xy}
\]
We treat the leftmost rule as the introduction rule for $\alpha$ and the other rules as the elimination rules for $\alpha$ because the c-design $( \sum_{\alpha^I} a ( \vec{x}_a ) . P_a ) | \ov{a_i} \langle N_{(i , 1)} , \ldots , N_{(i , k_i )} \rangle$ with $a_i ( \vec{x}_i ) \in \alpha^E$ is a cut; namely, a redex with respect to the reduction $\rwt$, which is a generalisation of $\beta$-reduction.

In addition, the finite sequence $\vec{z}$ in $\alpha$ fixes the arity of the connective $\alpha$. To see this, we define the \textit{semantic entailment} (introduced in \cite[Definition 2.7]{BT2010}) and \textit{behaviours composed by connectives} (a variant of the kind of behaviours defined in \cite[Definition 4.11]{terui2011}). These behaviours are also crucial  for our reformulation of harmony. A \textit{positive context} is a finite set $\{ x_1 : \bd{P}_1 , \ldots , x_n : \bd{P}_n \}$ of pairs of a variable $x_i$ and a positive a-behaviour $\bd{P}_i$ such that $x_1 , \ldots , x_n$ are pairwise distinct. A \textit{negative context} is a finite set $\{ \bd{N} \} \cup \Gam$ such that $\bd{N}$ is a negative a-behaviour and $\Gam$ is a positive context.
\begin{definition}[Semantic Entailment]
  $(1)$ Let $P$ be a positive standard c-design with $\fv{P} \subseteq \{ x_1 , \ldots , x_n \}$ and $\{ x_1 : \bd{P}_1 , \ldots , x_n : \bd{P}_n \}$ be a positive context. The entailment relation $P \models x_1 : \bd{P}_1 , \ldots , x_n : \bd{P}_n$ holds if and only if for any $M_1 \in \bd{P}_1^{\bot} , \ldots , M_n \in \bd{P}_n^{\bot}$, $P [ M_1 / x_1 , \ldots , M_n / x_n ]$ converges to $\maltese$. $(2)$ Let $N$ be a negative standard c-design with $\fv{N} \subseteq \{ x_1 , \ldots , x_n \}$ and $\{ x_1 : \bd{P}_1 , \ldots , x_n : \bd{P}_n ,\bd{N} \}$ be a negative context. The entailment relation $N \models x_1 : \bd{P}_1 , \ldots , x_n : \bd{P}_n , \bd{N}$ holds if and only if for any $M_1 \in \bd{P}_1^{\bot} , \ldots , M_n \in \bd{P}_n^{\bot}$ and any $Q \in \bd{N}^{\bot}$, $Q [ N [ M_1 / x_1 , \ldots , M_n / x_n ] / x_0 ]$ converges to $\maltese$.
\end{definition}

\begin{definition}[Behaviours Composed by Connectives]\label{BehCon}
For any name $a \in A$ with $\mathsf{ar} (a) = n$ and any negative a-behaviours $\bd{N}_1 ,\ldots , \bd{N}_n$, we define the set $\ov{a} \langle \bd{N}_1 ,\ldots , \bd{N}_n \rangle$ of negative c-designs as the set of all c-designs of the form $x_0 | \ov{a} \langle N_1 ,\ldots , N_n \rangle$ such that $N_i \in \bd{N}_i$ holds for any $i$ with $1 \leq i \leq n$.

Let $\alpha$ be an arbitrary $n$-ary connective. For any positive a-behaviours $\bd{P}_1 ,\ldots , \bd{P}_n$ and any negative a-behaviours $\bd{N}_1 ,\ldots , \bd{N}_n$, we define the positive a-behaviour $\alpha^E \langle \bd{N}_1 ,\ldots , \bd{N}_n \rangle$ and the negative a-behaviour $\alpha^I ( \bd{P}_1 ,\ldots , \bd{P}_n )$ as follows:
\begin{itemize}
\item $\alpha^E \langle \bd{N}_1 ,\ldots , \bd{N}_n \rangle := ( \bigcup_{a_i (\vec{x}_i ) \in \alpha^E} \ov{a_i} \langle \bd{N}_{(i,1)} ,\ldots , \bd{N}_{(i,k)} \rangle )^{\bot\bot}$, and
  
\item $\alpha^I ( \bd{P}_1 ,\ldots , \bd{P}_n ) := \bigcap_{a_i (\vec{x}_i ) \in \alpha^I} (\ov{a_i} \langle \bd{P}^{\bot}_{(i,1)} ,\ldots , \bd{P}^{\bot}_{(i,k)} \rangle^{\bot} )$.

\end{itemize}
\end{definition}

Note that we use $\alpha^E$ and $\alpha^I$ to define $\alpha^E \langle \bd{N}_1 ,\ldots , \bd{N}_n \rangle$ and $\alpha^I ( \bd{P}_1 ,\ldots , \bd{P}_n )$, respectively. This is the main difference between the definition above and \cite[Definition 4.11]{terui2011}.

\begin{example}\label{gamma}
Consider a ternary connective $\gamma = (x_1 , x_2 ,x_3 , \{ a (x_1 ,x_2 ) , b (x_3 )\} , \{ c (x_1 ), d (x_2 , x_3 ) \} )$. Moreover, let $\bd{P}_1 , \bd{P}_2 , \bd{P}_3$ be arbitrary positive a-behaviours. Then, if $P \models \Gam , x_1 : \bd{P}_1 , x_2 : \bd{P}_2$ and $Q \models \Gam , x_3 : \bd{P}_3$ hold, we have $a (x_1 ,x_2).P + b (x_3).Q \models \Gam , \gamma^I ( \bd{P}_1 , \bd{P}_2 , \bd{P}_3 )$. This fact corresponds to the $\gamma$-introduction rule with behaviour assignment expressed as the leftmost rule below. On the other hand, the set $\gamma^E$ gives the $\gamma$-elimination rules with behaviour assignment expressed as the remaining rules below.
\begin{center}
{\footnotesize
$\infer{a (x_1 ,x_2).P + b (x_3).Q \models \Gam , \gamma^I ( \bd{P}_1 , \bd{P}_2 , \bd{P}_3 ) }{
  P \models \Gam , x_1 : \bd{P}_1 , x_2 : \bd{P}_2
  &
  Q \models \Gam , x_3 : \bd{P}_3
}
\qquad
\infer{x | \ov{c} \langle N_1 \rangle \models \Gam , x : \gamma^E \langle \bd{P}^{\bot}_1 , \bd{P}^{\bot}_2 , \bd{P}^{\bot}_3 \rangle}{
  N_1 \models \Gam ,  \bd{P}^{\bot}_1
}
\qquad
\infer{x | \ov{d} \langle N_2 , N_3 \rangle \models \Gam ,\Del ,  x : \gamma^E \langle \bd{P}^{\bot}_1 , \bd{P}^{\bot}_2 ,\bd{P}^{\bot}_3 \rangle}{
  N_2 \models \Gam , \bd{P}^{\bot}_2
  &
  N_3 \models \Del , \bd{P}^{\bot}_3
}$
}
\end{center}
These examples show that the sequence $x_1 , x_2 , x_3$ in $\gamma$ fixes the arity of $\gamma$ and that the occurrences of $x_1 , x_2$ and $x_3$ in negative actions of $\alpha^I$ (resp. $\alpha^E$) determines the premises of the $\gamma$-introduction rule (resp. the $\gamma$-elimination rules). To illustrate the role of the order of variable-sequences in connectives, consider the connective $\delta = (x_1 ,x_2 , \{ a (x_1 ), b(x_2 ) \} , \{c (x_2 , x_1 ) \} )$, which provides the inference rules
\begin{center}
$\infer{a (x_1 ).P + b (x_2).Q \models \Gam , \bd{P}_1 \star \bd{P}_2 }{
  P \models \Gam , x_1 : \bd{P}_1
  &
  Q \models \Gam , x_2 : \bd{P}_2
}
\qquad
\infer{x | \ov{c} \langle N_2 , N_1 \rangle \models \Gam , \Del , x : \bd{P}^{\bot}_1 \: \ov{\star} \: \bd{P}^{\bot}_2}{
  N_2 \models \Gam ,  \bd{P}^{\bot}_2
  &
  N_1 \models \Del , \bd{P}^{\bot}_1
}$
\end{center}
with $\bd{P} \star \bd{Q} := \delta^I (\bd{P} , \bd{Q})$ and $\bd{N} \: \ov{\star} \: \bd{M} := \delta^E \langle \bd{N} , \bd{M} \rangle$. In the second rule, the order in which $\bd{P}_{1}^{\bot}$ and $\bd{P}_{2}^{\bot}$ appear in $\bd{P}_{1}^{\bot} \ov{\star} \bd{P}_{2}^{\bot}$ is the reverse of the way they appear in the premises, since $c (x_2 , x_1)$ has the order in which $x_2$ appears first.
\end{example}

\subsection{A Reformulation of Harmony}\label{reformham}
Using our notion of connective, we define the \textit{harmony condition} in Computational Ludics.
\begin{definition}[Harmony Condition]\label{harmony}
  Let $\alpha$ be an $n$-ary connective. The connective $\alpha$ satisfies the inversion condition if and only if $\alpha^E \subseteq \alpha^I$ holds, and $\alpha$ satisfies the recovery condition if and only if $\alpha^I \subseteq \alpha^E$ holds. We say that the connective $\alpha$ satisfies the harmony condition if and only if $\alpha$ satisfies both the inversion condition and the recovery condition.
\end{definition}

Notice that the connectives $\gamma$ and $\delta$ in Example \ref{gamma} do not satisfy this harmony condition and so they are ``bad" connectives in this sense.

The inversion condition above is a reformulation of Prawitz's inversion principle in the following sense. Let $\alpha$ be an $n$-ary connective, then the inversion condition for $\alpha$ is equivalent to the following condition (see Proposition \ref{intuition}.(1) below):
\begin{enumerate}
\item[($\beta$)] for any $\alpha^I$-indexed family $\{ P_{a_j} \}_{a_j ( \vec{x}_j ) \in \alpha^I}$ of positive total c-designs, any $a_i (x_{( i , 1 )} ,\ldots , x_{( i , k )} )$ in $\alpha^E$ and any negative c-designs $N_{1} , \ldots , N_{k}$, the c-design $( \sum_{\alpha^I} a_j ( \vec{x}_j ). P_{a_j} ) | \ov{a_i} \langle N_{1} , \ldots , N_{k} \rangle$ reduces to the c-design $P_{a_i}[N_1 / x_{( i , 1 )} , \ldots , N_k / x_{( i , k )} ]$ with $P_{a_i}[N_1 / x_{( i , 1 )} , \ldots , N_k / x_{( i , k )} ] \neq \Omega$.
\end{enumerate}
This condition means that any application of an $\alpha$-elimination rule after the $\alpha$-introduction rule restores one of the subdesigns that are premises of the latter rule; that is, the $\beta$-reduction is always available for $\alpha$. Therefore, our inversion condition corresponds to Prawitz's inversion principle via its equivalence to ($\beta$). Furthermore, our inversion condition makes precise and straightforward the idea of containment: $\alpha^{E}$ is contained in $\alpha^{I}$ in the set-theoretic sense.

Next, to explain the recovery condition, the $\eta$-\textit{expanded form} of a negative c-design $N$ with respect to a connective $\alpha$ is defined as the negative c-design $\sum_{\alpha^I} a_j ( \vec{x}_j ). ( N | \ov{a_j} \langle x_{( j , 1 )} ,\ldots , x_{( j , k_j )} \rangle )$ such that $\{ x_{( j , 1 )} ,\ldots , x_{( j , k_j )}\}$ and $\fv{N}$ are disjoint for any $\alpha_j ( \vec{x}_j ) \in \alpha^I$, where bound variables in $\alpha$ are renamed if necessary. This expanded form can be depicted as the following tree:
\[
\begin{xy}
  (0,0)*+{ \{ a_j ( \vec{x}_j ) \}_{ a_j ( \vec{x}_j ) \in \alpha^I } }="0", (-40,7)*+{ \ov{a_1} }="00", (-64,14)*+{ N }="000", (-50,14)*+{ x_{( 1 , 1 )}  }="001", (-33,14)*+{\cdots}="002", (-16,14)*+{  x_{( 1 , k_1 )}  }="003",
  (40,7)*+{ \ov{a_m} }="01", (16,14)*+{ N }="010", (30,14)*+{ x_{( m , 1 )} }="011", (47,14)*+{\cdots}="012", (64,14)*+{  x_{( m , k_m )} }="013",
  (0,7)*+{ \cdots }="A"
  \ar @{-}^{a_1}"0";"00" \ar @{-}"00";"000" \ar @{-}"00";"001" \ar @{-}"00";"003"
  \ar @{-}^{a_m}"01";"0" \ar @{-}"01";"010" \ar @{-}"01";"011" \ar @{-}"01";"013"
\end{xy}
\]
If we take a unary name $\lambda$ and put $@ := \ov{\lambda}$, then the $\eta$-expansion in $\lambda$-calculus can be expressed as the expansion of $N$ to $\lambda ( x ) . ( N | @ \langle x \rangle )$ and so the expanded form above is a generalisation of $\eta$-expansion in $\lambda$-calculus. Then, the following condition is equivalent to our recovery condition (see Proposition \ref{intuition}.(2) below):
\begin{enumerate}
\item[($\eta$)] there is a function $f$ mapping each negative action $a_j ( x_{( j , 1)} , \ldots , x_{(j , k)} )$ in $\alpha^I$ to a $k$-ary negative action $c_{f(j)} ( \vec{y}_{f(j)} ) \in \alpha^E$ such that for any negative c-design $N$, the $\eta$-expanded form of $N$ with respect to $\alpha$ and the c-design $\sum_{\alpha^I }a_j ( x_{(j ,1)} , \ldots , x_{(j , k)} ) . ( N | \ov{ c_{f(j)} } \langle x_{(j ,1)} , \ldots , x_{(j , k)} \rangle )$ are equal.
\end{enumerate}
This condition says that the $\eta$-expansion can be performed for $\alpha$, hence our recovery condition corresponds to the recovery principle via the condition ($\eta$) because the recovery principle means the availability of $\eta$-expansion in natural deduction. In particular, note that our recovery condition is literally the converse of the inversion condition: $\alpha^{I}$ is contained in $\alpha^{E}$. The following proposition summarises the correspondence between our inversion/recovery conditions and the inversion/recovery principles:
\begin{proposition}\label{intuition}
  Let $\alpha$ be an $n$-ary connective. $(1)$ The connective $\alpha$ satisfies the inversion condition if and only if $\alpha$ satisfies the condition $( \beta )$. $(2)$ The connective $\alpha$ satisfies the recovery condition if and only if $\alpha$ satisfies the condition $( \eta )$.
\end{proposition}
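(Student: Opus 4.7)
The plan is to treat both equivalences as direct unpackings of the relevant definitions. The two key ingredients are (a) the cut-reduction rule $(\sum a(\vec{x}_a).P_a) | \ov{a}\langle \vec{N} \rangle \rwt P_a[\vec{N}/\vec{x}_a]$ together with the $\sum_K$-convention that inserts $\Omega$ for names outside $K$, and (b) the uniqueness-of-names requirement built into the definition of a connective, namely that each name in $\alpha^I \cup \alpha^E$ appears at most once.

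For item (1), in the forward direction I assume $\alpha^E \subseteq \alpha^I$ and, given any $\alpha^I$-indexed family $\{P_{a_j}\}$ of positive total c-designs, any $a_i(\vec{x}_i) \in \alpha^E$, and any negative c-designs $N_1,\ldots,N_k$, I observe that $a_i(\vec{x}_i)$ also lies in $\alpha^I$, so that the $a_i$-branch of $\sum_{\alpha^I} a_j(\vec{x}_j).P_{a_j}$ is $P_{a_i}$ rather than the default $\Omega$. One step of cut-reduction then yields $P_{a_i}[N_1/x_{(i,1)},\ldots,N_k/x_{(i,k)}]$, and since $P_{a_i}$ is total its root is either $\maltese$ or a proper positive action, which substitution preserves, giving the required non-$\Omega$ conclusion. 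For the backward direction I argue contrapositively: if some $a_i(\vec{x}_i) \in \alpha^E$ fails to lie in $\alpha^I$, then instantiating $(\beta)$ with the family $P_{a_j} = \maltese$ produces a sum whose $a_i$-branch is $\Omega$ by the $\sum_K$-convention, so the cut reduces to $\Omega$, violating the non-$\Omega$ clause of $(\beta)$.

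For item (2), in the forward direction, starting from $\alpha^I \subseteq \alpha^E$, I define $f$ by sending each $a_j(\vec{x}_j) \in \alpha^I$ to itself regarded as an element of $\alpha^E$ (so that $c_{f(j)}(\vec{y}_{f(j)}) = a_j(\vec{x}_j)$); with this choice the two sums appearing in $(\eta)$ coincide branch by branch as labelled trees. In the backward direction, given an $f$ as asserted by $(\eta)$, I fix any $a_j(\vec{x}_j) \in \alpha^I$ and any $N$ and compare the $a_j$-branches of the two sums: equality as c-designs forces the subtrees $N | \ov{a_j}\langle \vec{x}_j \rangle$ and $N | \ov{c_{f(j)}}\langle \vec{x}_j \rangle$ to coincide as labelled trees, so that their root labels agree and $a_j = c_{f(j)}$ as names. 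Because $c_{f(j)}(\vec{y}_{f(j)}) \in \alpha^E$ and each name appears at most once in $\alpha^I \cup \alpha^E$, the two negative actions $a_j(\vec{x}_j)$ and $c_{f(j)}(\vec{y}_{f(j)})$ must then be identical, giving $a_j(\vec{x}_j) \in \alpha^E$.

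No deep obstacle is anticipated, as both equivalences reduce to syntactic inspection. The subtlest point is the backward direction of (1): it is precisely the non-$\Omega$ clause of $(\beta)$ that, together with the $\sum_K$-convention, rules out the presence in $\alpha^E$ of an action whose name is absent from $\alpha^I$. The other three directions are tree-equalities read off directly from the cut-reduction rule and the inspection of root labels.
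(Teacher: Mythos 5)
Your proposal is correct and follows essentially the same route as the paper's proof: the forward directions are direct unpackings of the cut-reduction rule and of the identity choice of $f$, and the backward directions rest on exactly the two observations the paper uses, namely that the $\sum_{\alpha^I}$-convention places $\Omega$ at any $a_i$-branch with $a_i(\vec{x}_i) \in \alpha^E \setminus \alpha^I$ (refuting the non-$\Omega$ clause of $(\beta)$), and that the pairwise-distinctness of names in $\alpha^I \cup \alpha^E$ forces $a_j(\vec{x}_j) = c_{f(j)}(\vec{y}_{f(j)})$ once the root labels of the $a_j$-branches are compared. The only difference is presentational: you spell out the directions the paper calls obvious and phrase the converses contrapositively rather than by contradiction.
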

\begin{proof}
  (1) The ``only if" direction is obvious. Conversely, suppose that $\alpha^E \nsubseteq \alpha^I$ holds, and take a negative action $a_i ( \vec{x}_i )$ in $\alpha^E \setminus \alpha^I$. Moreover, let $\{ P_{a_j} \}_{a_j ( \vec{x}_j ) \in \alpha^I}$ be an $\alpha^I$-indexed family of positive total c-designs and $\vec{N}_i$ be arbitrary negative c-designs. Then, we have $( \sum_{\alpha^I} a_j ( \vec{x}_j ). P_{a_j} ) | \ov{a_i} \langle \vec{N}_i \rangle \rwt P_{a_i}[ \vec{N}_i / \vec{x}_i ] = \Omega$ because $P_{a_i} = \Omega$ holds, contradiction.

  (2) The ``only if" direction is obvious. Suppose that there is a negative action $a_j ( x_{( j , 1 )} ,\ldots , x_{( j , k )} )$ in $\alpha^I \setminus \alpha^E$. By the definition of connectives, we have $a_j \neq a_i$ for any $k$-ary negative action $a_i$ in $\alpha^E$, hence $\sum_{\alpha^I }a_j ( x_{( j , 1 )} , \ldots , x_{( j , k )} ) . ( N | \ov{c_{f(j)}} \langle x_{( j , 1 )} , \ldots , x_{( j , k )} \rangle )$ cannot be the $\eta$-expanded form. Contradiction.
\end{proof}

Our notion of connective is a generalisation of the notion of \textit{logical connective} defined in \cite{terui2011,BST2010,BT2010}. In our framework, we define that a connective $\alpha$ is \textit{logical} if $\alpha$ satisfies the harmony condition. Below we abbreviate a logical connective $( \vec{z} , \{ a_1 ( \vec{x}_1 ) ,\ldots , a_n ( \vec{x}_n ) \} , \{ a_1 ( \vec{x}_1 ) ,\ldots , a_n ( \vec{x}_n ) \} )$ as $( \vec{z} , \{ a_1 ( \vec{x}_1 ) ,\ldots , a_n ( \vec{x}_n ) \} )$. Then, it is obvious that logical connectives in our sense coincide with ones in the sense of \cite{terui2011,BST2010,BT2010}. The connectives of the linear fragment $\mallp$ of polarised linear logic are defined in \cite{terui2011} as instances of logical connectives. For example, the connective $\with$, which is called \textit{With}, can be defined as the logical connective $( x_1 , x_2 , \{ \pi_1 (x_1) , \pi_2 (x_2)\} \} )$, and this gives the following inference rules:
\[
\infer{\pi_1 ( x_1 ).P + \pi_2 ( x_2 ).Q \models \Gam , \with^I ( \bd{P}_1 , \bd{P}_2 )}{
  P \models \Gam , x_1 : \bd{P}_1
  &
  Q \models \Gam , x_2 : \bd{P}_2
}
\quad
\infer{x | \ov{\pi_1} \langle N \rangle \models \Gam ,  x : \with^E \langle \bd{P}^{\bot}_1 , \bd{P}^{\bot}_2 \rangle}{
  N \models \Gam , \bd{P}^{\bot}_1
}
\quad
\infer{x | \ov{\pi_2} \langle M \rangle \models \Gam , x : \with^E \langle \bd{P}^{\bot}_1 , \bd{P}^{\bot}_2 \rangle}{
  M \models \Gam ,  \bd{P}^{\bot}_2
}
\]
If we put $\bd{N}_1 \oplus \bd{N}_2 := \with^E \langle \bd{N}_1 , \bd{N}_2 \rangle$ as in \cite{terui2011} and write $\bd{P}_1 \with \bd{P}_2$ instead of $\with^I ( \bd{P}_1 , \bd{P}_2 )$, then the rules above are exactly the $\with$-rule and the $\oplus$-rules in one-sided sequent calculus. This is compatible with our explanations of $\with^I$ as the introduction rule and $\with^E$ as the elimination rules, because the $\with$-introduction rule corresponds to the $\with$-right rule in sequent calculus and the $\with$-elimination rules correspond to the $\oplus$-right rules via the De Morgan equivalence between $( A \with B )^{\bot}$ and $A^{\bot} \oplus B^{\bot}$. Another pair of examples from $\mallp$-connectives is the pair of lifting operators: consider the logical connective $\shneg = (x_1 , \{ \shneg (x_1) \})$, then we have the inference rules
\[
\infer{\shneg ( x_1 ).P  \models \Gam , \shneg \bd{P}}{
  P \models \Gam , x_1 : \bd{P}
}
\qquad
\infer{x | \ov{\shneg} \langle N \rangle \models \Gam ,  x : \shpos \bd{P}^{\bot}}{
  N \models \Gam , \bd{P}^{\bot}
}
\]
where $\shneg \bd{P} := \shneg^I (\bd{P})$ and $\shpos \bd{N} := \shneg^E \langle \bd{N} \rangle$.

As explained in \cite{BT2010}, a logical connective of the form
\[
(x_{(1 , 1)} , \ldots , x_{(1 , k_1)} , \ldots , x_{(m , 1)} , \ldots , x_{(m , k_m)} , \{ a_1 (x_{(1 , 1)} , \ldots , x_{(1 , k_1)}) , \ldots , a_m (x_{(m , 1)} , \ldots , x_{(m , k_m)})\})
\]
induces the following inference rules (here we suppress the term-information):
\[
\infer{\Gam , \bigwith_{1 \leq i \leq m} \bigparr_{1 \leq j \leq k_i} \bd{P}_{(i , j)}}{
  \Gam ,\bd{P}_{(1 , 1)} , \ldots , \bd{P}_{(1 , k_1)}
  &
  \cdots
  &
  \Gam ,\bd{P}_{(m , 1)} , \ldots , \bd{P}_{(m , k_m)}
}
\]
\[
\infer{\Gam_1 ,\ldots , \Gam_{k_1} , \bigoplus_{1 \leq i \leq m} \bigotimes_{1 \leq j \leq k_i} \bd{P}_{(i , j)}^{\bot}}{
  \Gam_1 , \bd{P}_{(1 , 1)}^{\bot}
  &
  \cdots
  &
  \Gam_{k_1} , \bd{P}_{(1 , k_1)}^{\bot}
}
\quad
\infer{\Gam_1 ,\ldots , \Gam_{k_m} , \bigoplus_{1 \leq i \leq m} \bigotimes_{1 \leq j \leq k_i} \bd{P}_{(i , j)}^{\bot}}{
  \Gam_1 , \bd{P}_{(m , 1)}^{\bot}
  &
  \cdots
  &
  \Gam_{k_m} , \bd{P}_{(m , k_m)}^{\bot}
}
\]
where $\bigwith_{1 \leq i \leq n}$ is the $n$-ary With and $\bigparr_{1 \leq i \leq n}$ is the $n$-ary multiplicative disjunction (Par) with their duals $\bigoplus_{1 \leq i \leq n} , \bigotimes_{1 \leq i \leq n}$. This shows that logical connectives in Computational Ludics include \textit{synthetic} connectives (\cite{curien2005}) such as the combination of $\with$ and $\parr$ (or $\oplus$ and $\otimes$) but they do not cover the combination of connectives of opposite polarities such as $\otimes$ and $\parr$. Therefore, Acclavio-Maieli's generalised connectives in \cite{AcclavioMaieli20} are not subsumed under logical connectives of Computational Ludics, because the former covers the combination of $\otimes$ and $\parr$. Moreover, we conjecture that Computational Ludics cannot deal with \textit{non-decomposable} logical connectives (i.e. logical connectives which cannot be decomposed into standard $\mall$ connectives), which are studied in \cite{AcclavioMaieli20}. On the other hand, logical connectives of Computational Ludics subsume additive connectives, synthetic connectives and the units $\bot , \top , \bd{0} , \bd{1}$, and we conjecture that the framework of \cite{AcclavioMaieli20} cannot deal with these connectives and units.\footnote{As to our connectives not satisfying the harmony condition, they have the following difference from Acclavio-Maieli's generalised connectives: the latter connectives always satisfy a form of the inversion principle, because these connectives are designed to satisfy cut reduction, which is indeed nothing but the sequent calculus counterpart of detour reduction in natural deduction.}

While logical connectives are the connectives satisfying the harmony condition, there are inharmonious connectives, as we have seen some of them in Example \ref{gamma}. The failure of the harmony condition can be sometimes tricky. Consider a connective $\alpha_{0} = (x_{1}, x_{2},\{a(x_{1}), b(x_{2})\}, \{c(x_{1}), b(x_{2})\})$. Neither the inversion condition nor the recovery condition are satisfied by $\alpha_{0}$, but $\alpha_0$ has the negative action $b ( x_2 )$ which is a common element of both $\alpha_{0}^{I}$ and $\alpha_{0}^{E}$. This means that some $\beta$-reduction steps are definable for $\alpha_{0}$, although not all of the $\beta$-reduction steps are. Indeed, for any positive c-designs $P$ and $Q$ with $P \neq \Omega \neq Q$, we have $a ( x_1 ) .P + b ( x_2 ) .Q | \ov{b} \langle N \rangle \rwt Q[N / x_2  ] \neq \Omega$ but $a ( x_1 ) .P + b ( x_2 ) .Q | \ov{c} \langle N \rangle \rwt \Omega$. Hence, though the connective $\alpha_0$ is not a logical one, it is a connective that is not deprived of any meaning. More precisely, it is not completely deprived of any computational meaning, as it allows some $\beta$-reduction steps.

\section{Two Characterisations of Harmony Condition}\label{seccha}
In this section, we give two conditions each of which is equivalent to the harmony condition defined in the previous section. This will show that the harmony condition is in fact equivalent to a form of completeness which is proper to the Ludics point of view: the absence of any fundamental distinction between derivations and models is what makes it possible to pass from one to another, and vice versa (as to this viewpoint, see Section \ref{dertodes}). In Section \ref{ddc}, we propose the first condition called the \textit{dual decomposability of connectives}. Informally, a connective $\alpha$ is dually decomposable if the ``introduction" behaviour $\alpha^I ( \bd{P}_1 ,\ldots , \bd{P}_n )$ (resp. the ``elimination" behaviour $\alpha^E \langle \bd{N}_1 ,\ldots , \bd{N}_n \rangle$) is decomposed into $\bd{P}_1 ,\ldots , \bd{P}_n$ (resp. $\bd{N}_1 ,\ldots , \bd{N}_n$) via the negative actions in $\alpha^E$ (resp. $\alpha^I$). Specifically, we will have $\alpha^E \langle \bd{N}_1 ,\ldots , \bd{N}_n \rangle = \bigcup_{a_i (\vec{x}_i ) \in \alpha^I} \ov{a_i} \langle \bd{N}_{(i,1)} ,\ldots , \bd{N}_{(i,k)} \rangle \cup \{ \maltese \}$ in the case of the ``elimination" behaviour. Following \cite[p. 409]{girard2001}, one can see a form of completeness here: if the set $\bd{E} = \bigcup_{a_i (\vec{x}_i ) \in \alpha^E} \ov{a_i} \langle \bd{N}_{(i,1)} ,\ldots , \bd{N}_{(i,k)} \rangle$ is treated as a set of derivations composed from $\bd{N}_1 ,\ldots , \bd{N}_n$ by $\alpha^E$, $\bd{E}^{\bot}$ serves as a set of models which are orthogonal to any derivation in $\bd{E}$. Then, the biorthogonal $\bd{E}^{\bot\bot}$ corresponds to the set of derivations validated by these models, and $\bd{E}^{\bot\bot}$ is equal to $\alpha^E \langle \bd{N}_1 ,\ldots , \bd{N}_n \rangle$ by definition. The dual decomposability of connectives says that the derivations in $\bd{E}^{\bot\bot}$ except $\maltese$ are already included in $\bigcup_{a_i (\vec{x}_i ) \in \alpha^I} \ov{a_i} \langle \bd{N}_{(i,1)} ,\ldots , \bd{N}_{(i,k)} \rangle$, which is a set of derivations composed from $\bd{N}_1 ,\ldots , \bd{N}_n$ by $\alpha^I$. In this sense, $\bigcup_{a_i (\vec{x}_i ) \in \alpha^I} \ov{a_i} \langle \bd{N}_{(i,1)} ,\ldots , \bd{N}_{(i,k)} \rangle$ is ``complete".

In Section \ref{ddv}, we propose the second condition which is equivalent to the harmony condition. We call this condition the \textit{dual decomposability of visitable paths}. A visitable path is a sequence of actions induced by the interaction between the elements of an orthogonal pair of a c-design and an anti-design. In other words, a visitable path is an observable trace in the interaction between some c-design and anti-design (i.e. some programs), even if these programs are treated as black boxes (for a detailed discussion on the observability in Ludics, see \cite{faggian2006}). As remarked in \cite{FQ2018,pavaux2017,pavaux2017c}, visitable paths are closely related to the notion of interaction in game semantics. The dual decomposability of visitable paths says that one can find the decomposability and the completeness mentioned above not only in c-designs but also in these observable traces. Consider, for instance, a visitable path induced by some c-design $T$ in $\alpha^E \langle \bd{N}_1 ,\ldots , \bd{N}_n \rangle$ and some anti-design against $T$. Typically, such a visitable path has an action $x_0 | \ov{a_i} \langle \vec{x}_i \rangle$ with $a_i (\vec{x}_i ) \in \alpha^I$ as its first element, and the remaining sequence is obtained by ``shuffling" some visitable paths in $\bd{N}_{(i , 1)} , \ldots , \bd{N}_{(i , k)}$.

\subsection{Dual Decomposability of Connectives}\label{ddc}
To formulate the intuition behind the dual decomposability of connectives precisely, we define \textit{counter sets} by adapting the definition of $\ov{\alpha}^c \langle \bd{N}_1 ,\ldots , \bd{N}_n \rangle$ and $\alpha^c ( \bd{P}_1 ,\ldots , \bd{P}_n )$ in \cite[p.~2068]{terui2011} to our setting.
\begin{definition}[Counter Sets]
  Let $\alpha$ be an $n$-ary connective. For any positive a-behaviours $\bd{P}_1 ,\ldots , \bd{P}_n$ and any negative a-behaviours $\bd{N}_1 ,\ldots , \bd{N}_n$, we define the counter set $\alpha^I ( \bd{P}_1 ,\ldots , \bd{P}_n )^{\sC}$ for $\alpha^I ( \bd{P}_1 ,\ldots , \bd{P}_n )$ and the counter set $\alpha^E \langle \bd{N}_1 ,\ldots , \bd{N}_n \rangle^{\sC}$ for $\alpha^E \langle \bd{N}_1 ,\ldots , \bd{N}_n \rangle$ as follows.
  \begin{itemize}
  \item $\alpha^I ( \bd{P}_1 ,\ldots , \bd{P}_n )^{\sC} := \bigcup_{a_i (\vec{x}_i ) \in \alpha^I} \ov{a_i} \langle \bd{P}^{\bot}_{(i,1)} ,\ldots , \bd{P}^{\bot}_{(i,k)} \rangle$, and

  \item $\alpha^E \langle \bd{N}_1 ,\ldots , \bd{N}_n \rangle^{\sC}$ is defined as the set of all negative c-designs $N$ of the following form: for some $a_i ( \vec{x}_i ) \in \alpha^E$, some $x_{(i,l)} \in \{ \vec{x}_i \}$ and some $Q \in \bd{N}^{\bot}_{(i,l)}$,
\[
N = a_i ( \vec{x}_i ) . Q [x_{(i,l)} / x_0 ] + b_1 ( \vec{x}_{b_1} ). \maltese + \cdots + b_{m} ( \vec{x}_{b_m} ).\maltese ,
\]
where $\alpha^E \setminus \{ a_i ( \vec{x}_i ) \} = \{ b_1 ( \vec{x}_{b_1} ),\ldots , b_m ( \vec{x}_{b_m} )\}$ holds. Below we use the following abbreviation: $a_i ( \vec{x}_i ) . Q [x_{(i,l)} / x_0 ] +  \maltese_{\alpha^E} := a_i ( \vec{x}_i ) . Q [x_{(i,l)} / x_0 ] + b_1 ( \vec{x}_{b_1} ). \maltese + \cdots + b_{m} ( \vec{x}_{b_m} ).\maltese$.
  \end{itemize}
\end{definition}

For any negative a-behaviours $\bd{N}_1 ,\ldots , \bd{N}_n$, we define
\[
  [\bd{N}_1 /x_1 ,\ldots , \bd{N}_n /x_n ] := \{ [N_1 /x_1 ,\ldots , N_n /x_n ] : \text{$N_i \in \bd{N}_i$ for any $i$ with $1 \leq i \leq n$} \}.
\]
\begin{definition}[Dual Decomposability of Connectives]
  Let $\alpha$ be an $n$-ary connective. Then, $\alpha$ is dually decomposable if and only if $\alpha$ satisfies the following conditions:
  \begin{enumerate}
  \item $\alpha^E \langle \bd{N}_1 ,\ldots , \bd{N}_n \rangle = \alpha^I ( \bd{N}^{\bot}_1 ,\ldots , \bd{N}^{\bot}_n )^{\sC} \cup \{ \maltese \}$, and

  \item $\sum a ( \vec{x}_a ).P_a \in \alpha^I ( \bd{P}_1 ,\ldots , \bd{P}_n )$ holds if and only if $P_{a_i} \in [\bd{P}^{\bot}_{(i,1)} / x_{(i,1)} ,\ldots , \bd{P}^{\bot}_{(i,k)} / x_{(i,k)} ]^{\bot}$ holds for any $a_i ( x_{(i,1)} ,\ldots , x_{(i,k)} )$ in $\alpha^E$.
  \end{enumerate}
\end{definition}

For any logical connective $\alpha = ( \vec{z} ,\alpha_0 )$, the dual decomposability of $\alpha$ is essentially equivalent to the internal completeness of $\alpha$ formulated in \cite[\S~4.4]{terui2011} because $\alpha^I ( \bd{P}_1 ,\ldots , \bd{P}_n )^{\sC}$ and $\alpha^E \langle \bd{N}_1 ,\ldots , \bd{N}_n \rangle^{\sC}$ are equal to $\ov{\alpha}^c \langle \bd{P}^{\bot}_1 ,\ldots , \bd{P}^{\bot}_n \rangle$ and $\alpha^c ( \bd{N}^{\bot}_1 ,\ldots , \bd{N}^{\bot}_n )$ in \cite{terui2011}, respectively. By using this equivalence, one can prove Lemma \ref{internal} and Proposition \ref{proone} below in a manner similar to the proof of \cite[Lemma 4.13, Theorem 4.14]{terui2011} because any connective satisfying the harmony condition is a logical connective. We will prove Lemma \ref{internal} for readers' convenience, but omit a proof of Proposition \ref{proone}. Note that in \cite{terui2011}, c-designs in behaviours may include cuts and so head normal c-designs are used in \cite[Lemma 4.13]{terui2011}. Here, instead, any c-design in behaviours is cut-free; we thus need not use head normal c-designs.
\begin{lemma}\label{internal}
  Let $\alpha = (\vec{z} , \alpha_0 )$ be an $n$-ary logical connective. We have the following assertions:
  \begin{enumerate}
  \item $(\alpha_0 \langle \bd{N}_{1} ,\ldots , \bd{N}_{n} \rangle^{\sC} )^{\bot}  \subseteq ( \bigcup_{a_i (\vec{x}_i) \in \alpha_0} \ov{a_i}\langle \bd{N}_{(i,1)} ,\ldots , \bd{N}_{(i,k)} \rangle ) \cup \{ \maltese \}$.

  \item $\alpha_0 \langle \bd{N}_{1} ,\ldots , \bd{N}_{n} \rangle \subseteq (\alpha_0 \langle \bd{N}_{1} ,\ldots , \bd{N}_{n} \rangle^{\sC})^{\bot}$.

  \item If $\sum a ( \vec{x}_a ).P_a \in (\alpha_0 ( \bd{P}_{1} ,\ldots , \bd{P}_{n} )^{\sC} )^{\bot}$ holds then for any $a_i ( x_{(i,1)} ,\ldots , x_{(i,k)} ) \in \alpha_0$, we have
    \[
    P_{a_i} \in [\bd{P}^{\bot}_{(i,1)} / x_{(i,1)} ,\ldots , \bd{P}^{\bot}_{(i,k)} / x_{(i,k)} ]^{\bot}.
    \]

  \item $\alpha_0 ( \bd{P}_{1} ,\ldots , \bd{P}_{n} ) = (\alpha_0 ( \bd{P}_{1} ,\ldots , \bd{P}_{n} )^{\sC} )^{\bot}$.
    
  \end{enumerate}
\end{lemma}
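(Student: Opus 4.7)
The plan is to prove the four clauses consecutively, relying throughout on the structural constraints imposed by standardness, on the generalised $\beta$-reduction rule for cuts, and on the associativity theorem (Theorem~\ref{ass}). A key preliminary observation is that every standard positive atomic c-design is either $\maltese$ or of the shape $x_0 | \ov{a} \langle M_1 ,\ldots, M_k \rangle$ with each $M_j$ itself standard and atomic: cut-freeness plus totality rules out $\Omega$ and cuts, while $\fv{P} \subseteq \{x_0\}$ combined with linearity forces the head variable to be $x_0$ and each $\fv{M_j}$ to be empty.

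For clause~(1), I would fix $P \in (\alpha_0 \langle \bd{N}_1 ,\ldots, \bd{N}_n \rangle^{\sC})^{\bot}$, dispatch the case $P = \maltese$, and then focus on $P = x_0 | \ov{a} \langle M_1 ,\ldots, M_k \rangle$. To force $a$ to coincide with some name occurring in $\alpha_0$, I probe $P$ with a counter-design $N = a_j(\vec{x}_j).Q[x_{(j,l)}/x_0] + \maltese_{\alpha_0}$ for any fixed $a_j(\vec{x}_j) \in \alpha_0$: if $a \notin \alpha_0$, then $P[N/x_0]$ reduces to $\Omega$, contradicting $P \bot N$; hence $a = a_i$ for some $a_i(\vec{x}_i) \in \alpha_0$. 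Once $a = a_i$ is known, I probe further: for each $l$ and each $Q \in \bd{N}^{\bot}_{(i,l)}$, the counter-design $N = a_i(\vec{x}_i).Q[x_{(i,l)}/x_0] + \maltese_{\alpha_0}$ makes $P[N/x_0]$ reduce in one step to $Q[M_l/x_0]$, which by $P \bot N$ and Theorem~\ref{ass} must converge to $\maltese$. So $M_l \in \bd{N}^{\bot\bot}_{(i,l)} = \bd{N}_{(i,l)}$, whence $P \in \ov{a_i} \langle \bd{N}_{(i,1)} ,\ldots, \bd{N}_{(i,k)} \rangle$.

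For clause~(2), by the biorthogonal closure defining $\alpha_0 \langle \bd{N}_1 ,\ldots, \bd{N}_n \rangle$ it suffices to show that every generator $P = x_0 | \ov{a_j} \langle M_1 ,\ldots, M_k \rangle$ of $\bigcup_{a_i(\vec{x}_i) \in \alpha_0} \ov{a_i} \langle \bd{N}_{(i,1)} ,\ldots, \bd{N}_{(i,k)} \rangle$ is orthogonal to every counter-design $N = a_i(\vec{x}_i).Q[x_{(i,l)}/x_0] + \maltese_{\alpha_0}$. Cut reduction splits into two cases: if $j = i$, then $P[N/x_0]$ reduces to $Q[M_l/x_0]$, which converges to $\maltese$ because $M_l \in \bd{N}_{(i,l)}$ and $Q \in \bd{N}^{\bot}_{(i,l)}$; if $j \neq i$, then the $a_j$-branch of $N$ is a $\maltese$-leaf and the cut reduces directly to $\maltese$. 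Clauses (3) and (4) are essentially bookkeeping: for~(3), applying the hypothesis to any $U = x_0 | \ov{a_i} \langle M_1 ,\ldots, M_k \rangle$ with $M_j \in \bd{P}^{\bot}_{(i,j)}$ and reducing the resulting cut yields that $P_{a_i}[M_1/x_{(i,1)} ,\ldots, M_k/x_{(i,k)}]$ converges to $\maltese$, which is exactly the desired orthogonality; for~(4), $\alpha_0(\bd{P}_1 ,\ldots, \bd{P}_n)$ is by definition the intersection of the orthogonals of the pieces whose union is the counter set, so the identity $(X \cup Y)^{\bot} = X^{\bot} \cap Y^{\bot}$ gives the result.

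The main obstacle is clause~(1), the only step that requires inspecting the shape of an arbitrary orthogonal c-design rather than unfolding definitions. Two subtleties deserve care: first, the probing counter-designs must not only rule out a wrong head name but also inspect each argument slot, which is exactly the role of the parameter $l$ in the counter set; second, the appeal to linearity and totality of $P$ is essential to guarantee that $x_0$ occurs precisely at the head of $P$, so that the single cut reduction step isolates $Q[M_l/x_0]$ without interference from other occurrences of $x_0$ inside the $M_j$.
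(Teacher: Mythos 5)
Your proposal is correct and follows essentially the same route as the paper's proof: clause (1) by probing with the counter-designs $a_i(\vec{x}_i).Q[x_{(i,l)}/x_0]+\maltese_{\alpha_0}$ to pin down the head name and then each argument slot, clause (2) by reducing orthogonality to membership of the counter-designs in $(\bigcup_{a_i}\ov{a_i}\langle\ldots\rangle)^{\bot}$ via biorthogonal closure, clause (3) by cutting against the generators $x_0|\ov{a_i}\langle N_1,\ldots,N_k\rangle$, and clause (4) by $(X\cup Y)^{\bot}=X^{\bot}\cap Y^{\bot}$. The only difference is that you make explicit some steps the paper leaves implicit (the shape analysis of standard atomic positive c-designs and the case split $j=i$ versus $j\neq i$ in clause (2)), which is fine.
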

\begin{proof}
  (1.) Assume that $P \in (\alpha_0 \langle \bd{N}_{1} ,\ldots , \bd{N}_{n} \rangle^{\sC} )^{\bot}$ holds. The case of $P = \maltese$ is trivial, so let $P$ be $x_0 | \ov{b} \langle M_1 ,\ldots , M_m \rangle$. By assumption, $b = a_i$ holds for some $a_i ( \vec{x}_i ) \in \alpha_0$. Fix an arbitrary $x_{(i , l)}$ and an arbitrary $Q \in \bd{N}_{(i,l)}^{\bot}$, then we have $P \bot a_i ( \vec{x}_i ) . Q [x_{(i,l)} / x_0 ] +  \maltese_{\alpha_0}$. Therefore, $Q \bot M_l$ holds for any $Q \in \bd{N}_{(i,l)}^{\bot}$. It follows that each $M_l$ belongs to $\bd{N}_{(i,l)}$, hence $P \in \bigcup_{a_i (\vec{x}_i) \in \alpha_0} \ov{a_i}\langle \bd{N}_{(i,1)} ,\ldots , \bd{N}_{(i,k)} \rangle$ holds.

  (2.) Consider $P \in \alpha_0 \langle \bd{N}_{1} ,\ldots , \bd{N}_{n} \rangle$ and $N = a_i ( \vec{x}_i ) . Q [x_{(i,l)} / x_0 ] +  \maltese_{\alpha_0}\in \alpha_0 \langle \bd{N}_{1} ,\ldots , \bd{N}_{n} \rangle^{\sC}$. We show $P \bot N$, and it suffices to verify that $N \in ( \bigcup_{a_i (\vec{x}_i) \in \alpha_0} \ov{a_i}\langle \bd{N}_{(i,1)} ,\ldots , \bd{N}_{(i,k)} \rangle )^{\bot}$ holds. This holds by the definition of $\alpha_0 \langle \bd{N}_{1} ,\ldots , \bd{N}_{n} \rangle^{\sC}$.

  (3.) Assume that $\sum a ( \vec{x}_a ).P_a \in (\alpha_0 ( \bd{P}_{1} ,\ldots , \bd{P}_{n} )^{\sC} )^{\bot}$ holds, and consider an arbitrary $a_i ( \vec{x}_i ) \in \alpha_0$. Then, for any $N_1 \in \bd{P}_{(i,1)}^{\bot} ,\ldots , N_k \in \bd{P}_{(i,k)}^{\bot}$, we have $Q := x_0 | \ov{a_i} \langle N_1 , \ldots , N_k \rangle \in \ov{a_i} \langle \bd{P}_{(i,1)}^{\bot} ,\ldots , \bd{P}_{(i,k)}^{\bot} \rangle$, hence $\sum a ( \vec{x}_a ).P_a$ and $Q$ are orthogonal. Therefore, we have
  \[
  P_{a_i} [N_1 / x_{(i,1)} ,\ldots , N_k / x_{(i,k)} ] \rwt \maltese
  \]
  and so $P_{a_i} \in [\bd{P}^{\bot}_{(i,1)} / x_{(i,1)} ,\ldots , \bd{P}^{\bot}_{(i,k)} / x_{(i,k)} ]^{\bot}$ holds.

  (4.) This follows from $\bigcap_{a_i (\vec{x}_i ) \in \alpha_0} ( \ov{a_i} \langle \bd{P}^{\bot}_{(i,1)} ,\ldots , \bd{P}^{\bot}_{(i,k)} \rangle^{\bot} ) = (\bigcup_{a_i (\vec{x}_i ) \in \alpha_0} \ov{a_i} \langle \bd{P}^{\bot}_{(i,1)} ,\ldots , \bd{P}^{\bot}_{(i,k)} \rangle )^{\bot}$.
\end{proof}

By the lemma above, we have the following proposition. The converse of this proposition will be obtained by Proposition \ref{protwo} below, which finishes not only the first characterisation of the harmony condition but also the second.
\begin{proposition}\label{proone}
  If a connective $\alpha$ satisfies the harmony condition, then $\alpha$ is dually decomposable.
\end{proposition}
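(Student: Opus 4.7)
The plan is to reduce each clause of dual decomposability to an appropriate item of Lemma \ref{internal}. Harmony gives $\alpha^I = \alpha^E$, so I can write $\alpha_0 := \alpha^I = \alpha^E$ and treat $\alpha$ as a logical connective in the sense of $(\vec{z},\alpha_0)$. Using that $\bd{N}_j^{\bot\bot}=\bd{N}_j$ for any a-behaviour, the counter set $\alpha^I(\bd{N}_1^\bot,\ldots,\bd{N}_n^\bot)^{\sC}$ unfolds to $\bd{U} := \bigcup_{a_i(\vec{x}_i)\in\alpha_0} \ov{a_i}\langle \bd{N}_{(i,1)},\ldots,\bd{N}_{(i,k)}\rangle$, while $\alpha^E\langle\bd{N}_1,\ldots,\bd{N}_n\rangle$ is by definition $\bd{U}^{\bot\bot}$; clause (1) thus becomes the identity $\bd{U}^{\bot\bot}=\bd{U}\cup\{\maltese\}$.

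For this identity, the inclusion $\supseteq$ is immediate: $\bd{U}\subseteq\bd{U}^{\bot\bot}$ by biorthogonality, and $\maltese$ is in $\bd{U}^{\bot\bot}$ since $\maltese[N/x_0]=\maltese$ for every atomic $N$. For $\subseteq$, I would first verify the auxiliary inclusion $\alpha_0\langle\bd{N}_1,\ldots,\bd{N}_n\rangle^{\sC}\subseteq\bd{U}^\bot$ by a direct cut-reduction argument: given $N = a_i(\vec{x}_i).Q[x_{(i,l)}/x_0]+\maltese_{\alpha_0}$ in the counter set and $P = x_0|\ov{a_j}\langle M_1,\ldots,M_k\rangle$ in $\bd{U}$, the mismatched-name case $a_j\neq a_i$ is handled by the $\maltese_{\alpha_0}$ summands, while the matched case reduces to $Q[M_l/x_0]$ with $Q\in\bd{N}_{(i,l)}^\bot$ and $M_l\in\bd{N}_{(i,l)}$, hence converges to $\maltese$. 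Taking orthogonals yields $\bd{U}^{\bot\bot}\subseteq(\alpha_0\langle\bd{N}_1,\ldots,\bd{N}_n\rangle^{\sC})^\bot$, and Lemma \ref{internal}.(1) then slots in to give the desired containment in $\bd{U}\cup\{\maltese\}$.

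For clause (2), Lemma \ref{internal}.(4) rewrites $\alpha^I(\bd{P}_1,\ldots,\bd{P}_n)$ as $(\alpha_0(\bd{P}_1,\ldots,\bd{P}_n)^{\sC})^\bot$, so the equivalence to prove reduces to: $\sum a(\vec{x}_a).P_a$ is orthogonal to every $Q = x_0|\ov{a_i}\langle N_1,\ldots,N_k\rangle$ in $\alpha_0(\bd{P}_1,\ldots,\bd{P}_n)^{\sC}$ iff each $P_{a_i}$ satisfies the condition on substitutions from $[\bd{P}_{(i,1)}^\bot/x_{(i,1)},\ldots,\bd{P}_{(i,k)}^\bot/x_{(i,k)}]$. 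The forward direction is exactly Lemma \ref{internal}.(3). For the converse, I reduce the cut $(\sum a(\vec{x}_a).P_a)|\ov{a_i}\langle N_1,\ldots,N_k\rangle$ to $P_{a_i}[N_1/x_{(i,1)},\ldots,N_k/x_{(i,k)}]$, invoke the hypothesis on $P_{a_i}$ to obtain convergence to $\maltese$, and transport this convergence back to the original cut using associativity (Theorem \ref{ass}). The main subtlety, rather than a genuine obstacle, is bookkeeping: the equalities $\alpha^I = \alpha^E = \alpha_0$ delivered by harmony are what align the $\alpha^I$-indexed left-hand side of clause (1) and the $\alpha^E$-indexed hypothesis in clause (2) with the $\alpha_0$-indexing used throughout Lemma \ref{internal}, and this is precisely where the argument would fail in the absence of harmony.
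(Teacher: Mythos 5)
Your proposal is correct and follows exactly the route the paper intends: it observes that harmony collapses $\alpha^I$ and $\alpha^E$ into a single set $\alpha_0$, making $\alpha$ a logical connective, and then derives clause (1) of dual decomposability from Lemma \ref{internal}.(1)--(2) (via the identity $\bd{U}^{\bot\bot}=\bd{U}\cup\{\maltese\}$) and clause (2) from Lemma \ref{internal}.(3)--(4), which is precisely the proof the paper omits with its pointer to that lemma. The only cosmetic remark is that the appeal to Theorem \ref{ass} in the converse of clause (2) is unnecessary, since a single cut-reduction step followed by the hypothesis on $P_{a_i}$ already yields $\rwt^{\ast}$-convergence to $\maltese$.
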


\subsection{Dual Decomposability of Visitable Paths}\label{ddv}
In this subsection, we discuss our second characterisation of the harmony condition. There are three notions that are crucial for this characterisation: \textit{interaction sequences}, \textit{visitable paths} and the \textit{regularity} of behaviours. We follow \cite{pavaux2017,pavaux2017c} in defining these notions. We first introduce several notions concerning sequences of actions, in order to define interaction sequences.
\begin{definition}[Located Actions]
  A located action is one of the following expressions: $(1)$ Daimon $\maltese$, $(2)$ an expression $x | \ov{a} \langle x_1 ,\ldots , x_n \rangle$ containing a variable $x$ and a proper positive action $\ov{a}$ followed by $x_1 ,\ldots , x_n$ such that $\mathsf{ar}(a) = n$ holds and $x , x_1 ,\ldots , x_n $ are pairwise distinct, $(3)$ an expression $a^x (x_1 ,\ldots , x_n)$ which consists of a variable $x$ and a proper negative action $a (x_1 ,\ldots , x_n)$ with $x \not\in \{ x_1 ,\ldots , x_n\}$.
\end{definition}

In the definition above, we made a slight modification of the notations in \cite{pavaux2017,pavaux2017c}: we denote negative located actions by $a^x (x_1 ,\ldots , x_n)$ instead of $a_x (x_1 ,\ldots , x_n)$. The empty sequence is denoted by $\epsilon$. In addition, we use the following variables: $\kappa$ for located actions, $\kappa^+$ for positive located actions and $\kappa^-$ for negative located actions. Hereafter, the word ``actions" always means located actions. When $\kappa$ is of the form $x | \ov{a} \langle x_1 ,\ldots , x_n \rangle$ or $a^x (x_1 ,\ldots , x_n)$, we say $a$ is the \textit{name} of $\kappa$, $x$ is the \textit{address} of $\kappa$ and $x_1 ,\ldots ,x_n$ are the \textit{arguments} of $\kappa$. Located actions except $\maltese$ are called \textit{proper located actions}.

The basic entities in this subsection are the following sequences of actions.
\begin{definition}[Alternated Justified Sequences]
  A finite sequence $s = \kappa_1 \cdots \kappa_n$ of actions is an alternated justified sequence $($in short, aj-sequence$)$ if and only if $s$ satisfies all of the following conditions:
  \begin{itemize}
  \item Alternation: the polarity of $\kappa_i$ is the opposite of the polarity of $\kappa_{i+1}$ for any $i$ with $1 \leq i \leq n$.



  \item Linearity: each variable occurring in $s$ is the address of at most one action in $s$.

  \item Daimon: if $\maltese$ appears in $s$ then $\maltese = \kappa_n$ holds.

  \item Justification: for any proper action $\kappa_i$ in $s$, either $(1)$ there is a unique action $\kappa_j$ of the opposite polarity such that $j < i$ holds and the arguments of $\kappa_j$ includes the address of $\kappa_i$ or $(2)$ there is no $\kappa_j$ in $s$ such that the arguments of $\kappa_j$ includes the address of $\kappa_i$. We say that $\kappa_i$ is justified by $\kappa_j$ and denote $\kappa_j$ by $\mathsf{just}( \kappa_i )$ if $(1)$ holds, otherwise we say $\kappa_i$ is initial.

  \end{itemize}
  We say that $x$ is free in an aj-sequence $s$ if and only if $x$ occurs in $s$ only as the address of some action in $s$, and that $x$ is bound in $s$ if and only if $x$ occurs in $s$ as an argument of some action in $s$.
\end{definition}

Note that the empty sequence $\epsilon$ is trivially an aj-sequence and that we adopted Barendregt's variable convention (cf. \S~\ref{harm}). We identify two aj-sequences $s_1$ and $s_2$ that are identical modulo renaming of bound variables of $s_1$ and $s_2$. For example, $( x | \ov{a} \langle y_1 , y_2 \rangle )(a_{y_1} (z_1 , z_2 ) )$ and $( x | \ov{a} \langle v_1 , v_2 \rangle )(a_{v_1} (w_1 , w_2 ) )$ are the same aj-sequence. If $\kappa = x | \ov{a}\langle x_1 ,\ldots , x_n \rangle$ (resp. $\kappa = a^x ( x_1 ,\ldots , x_n ) $) holds, we write $a^x ( x_1 ,\ldots , x_n )$ (resp. $x | \ov{a} \langle x_1 ,\ldots , x_n \rangle$) as $\ov{\kappa}$. If $s = \kappa_1 \cdots \kappa_n$ is a non-empty sequence of proper actions, we denote $\ov{\kappa_1} \cdots \ov{\kappa_n}$ by $\ov{s}$. Moreover, we put $\ov{\epsilon} := \epsilon$. Let $s = \kappa_1 \cdots \kappa_n$ be a finite sequence of actions ($n \geq 0$) such that $\kappa_n$ is the only occurrence of $\maltese$ if $\maltese$ occurs in $s$. We define the \textit{dual} $\dual{s}$ of $s$ as follows: if $\maltese$ occurs in $s$ then $\dual{s} := \ov{\kappa_1} \cdots \ov{\kappa_{n-1}}$, otherwise $\dual{s} := \ov{ s } \maltese$. We in particular have $\dual{ \maltese } = \epsilon$, $\dual{ \epsilon } = \maltese$ and $\dual{ \dual{s} } = s$.

Next, we define \textit{paths}, which subsume some interaction sequences as typical examples. But we first define \textit{views} and \textit{anti-views} of aj-sequences to introduce the notion of path.
\begin{definition}[Views and Anti-Views of Alternated Justified Sequences]\label{def:views}
  Let $s$ be an aj-sequence. We define the view $\view{s}$ of $s$ by induction. $(1)$ If $s = \epsilon$ holds, then $\view{s} := \epsilon$. $(2)$ If $s = s'\kappa^+$ holds, then $\view{s} := \view{s'}\kappa^+$. $(3)$ Let $s$ be $s' \kappa^-$. If $\kappa^-$ is initial, then $\view{s} := \kappa^-$, otherwise $\view{s} := \view{s_0} \kappa^-$ where $s_0$ is the prefix of $s$ such that $\kappa^-$ is justified by the last action of $s_0$.

  The anti-view $\antiview{s}$ of $s$ is defined as $\antiview{s} := \dual{ \view{ s_0 } }$ with $s_0 = \dual{s}$.
\end{definition}

\begin{definition}[Paths]\label{def:paths}
  A path is an aj-sequence $s$ satisfying 1. and 2. below.
  \begin{enumerate}
  \item Proponent-visibility: For any prefix $s_0 \kappa^+$ of $s$ with $\kappa^+$ proper, if $\kappa^+$ is justified in $s_0$ then $\mathsf{just}( \kappa^+ )$ occurs in $\view{s_0}$.

  \item Opponent-visibility: For any prefix $s_0 \kappa^-$ of $s$, if $\kappa^-$ is justified in $s_0$ then $\mathsf{just}( \kappa^- )$ occurs in $\antiview{s_0}$.
  \end{enumerate}
\end{definition}

A non-empty path is called \textit{positive} (resp. \textit{negative}) if its first action is positive (resp. negative), and the empty path $\epsilon$ is defined as a negative path. When $D$ is a set of sequences of actions and $\kappa$ is a proper action, we denote the set $\{ \kappa  s : s \in D \}$ of sequences by $\kappa D$.

One can also consider views and paths occurring in c-designs or anti-designs, where c-designs and anti-designs are treated as trees or forests formed by views (see Figure \ref{figone}). These notions of views and paths are used in our proof too. 
\begin{definition}[Views and Paths of C-Designs and Anti-Designs]
  Let $P$ be a positive c-design and $N$ be a negative c-design with $x \not\in \fv{N}$, and assume that both $P$ and $N$ are cut- and identity-free. We define the two sets $\viewd{P}$ and $\viewd{N}_x$ of sequences of actions simultaneously:
  \begin{itemize}
  \item $\viewd{\Omega} := \emptyset$, $\viewd{\maltese} := \{ \maltese \}$ and $\viewd{ x | \ov{a} \langle \vec{ N } \rangle } := \{ \kappa^+_a \} \cup \bigcup_{i \leq \mathsf{ar}(a)}  \kappa^+_a \viewd{ N_i }_{y_i}$, where $\vec{y} = y_1 , \ldots , y_k$ are fresh and $\kappa^+_a = x | \ov{a} \langle \vec{y} \rangle$ holds,

  \item $\viewd{ N } := \viewd{ N }_{x_0}$, $\viewd{ \sum a ( \vec{x}_a ). P_a }_x := \{ \epsilon \} \cup \bigcup \{ \kappa^-_a \viewd{P_a} \cup \{ \kappa^-_a \}  : P_a \neq \Omega \}$ with $\kappa^-_a = a^x ( \vec{x}_a )$.

  \end{itemize}
  A sequence $s$ of actions is a view of $P$ $($resp. a view of $[N /x])$ if and only if $s \in \viewd{P}$ $($resp. $s \in \viewd{N}_x)$ holds. A path $p$ is a path of $P$ $($resp. a path of $[N/x])$ if and only if for any non-empty prefix $s$ $($resp. any prefix $s)$ of $p$, $\view{s}$ is a view of $P$ $($resp. a view of $[N/x])$.

  A view of a cut-free anti-design $[G]$ is a view of some member of $[G]$. A path of a cut-free anti-design $[G]$ against positives $($resp. a cut-free anti-design $[G]$ against negatives$)$ is a negative path $($resp. a positive path$)$ $s$ such that for any prefix $s_0$ $($resp. any non-empty prefix $s_0 )$ of $s$, $\view{s_0}$ is a view of $[G]$.
\end{definition}

Intuitively, a view of a c-design $T$ is a branch (or one of its prefixes) of Pavaux's tree-representation $\mathcal{T} ( T )$ of the c-design $T$ (for the details, see \cite[\S~3.1]{pavaux2017} and \cite[\S~1.2]{pavaux2017c}). Then, a path of $T$ is a sequence in $\mathcal{T} ( T )$ traced by proceeding along possibly several branches from the root. For instance, in this representation of c-designs, the positive c-design $P$ and the anti-design $[N / x_0]$ with
\[
P = x_0 | \ov{a} \Bigl\langle b ( x_1 ) . ( x_1 | \ov{c} \langle \rangle ) , b ( x_2 ) . ( x_2 | \ov{c} \langle \rangle ) \Bigr\rangle ,\quad N = a ( y_1 , y_2 ) . \Bigl( y_1 | \ov{b} \Bigl\langle c () . \bigl( y_2 | \ov{b} \langle a ( y_5 , y_6 ) . \maltese + c () . \maltese \rangle \bigr) \Bigr\rangle \Bigr)
\]
are depicted as the left-hand tree and the right-hand tree in Figure \ref{figone}, respectively.
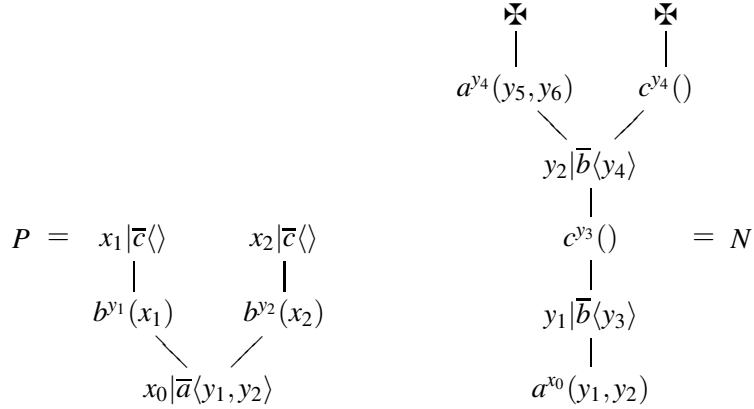
\begin{figure}[t]
\[
\begin{xy}
  (0,0)*+{x_0 | \ov{a} \langle y_1 , y_2 \rangle}="0", (-10,10)*+{b^{y_1} ( x_1 )}="00", (10,10)*+{b^{y_2} ( x_2 )}="01", (-10,20)*+{x_1 | \ov{c} \langle \rangle}="000", (10,20)*+{x_2 | \ov{c} \langle \rangle}="010", (-20,20)*+{=}, (-25,20)*+{P}
  \ar @{-}"0";"00" \ar @{-}"0";"01" \ar @{-}"00";"000" \ar @{-}"01";"010"
\end{xy}
\qquad \qquad
\begin{xy}
  (0,0)*+{a^{x_0} ( y_1 , y_2 )}="0", (0,10)*+{y_1 | \ov{b} \langle y_3 \rangle}="00", (0,20)*+{c^{y_3} ()}="000", (0,30)*+{y_2 | \ov{b} \langle y_4 \rangle}="0000", (-10,40)*+{a^{y_4} ( y_5 , y_6 )}="00000", (10,40)*+{c^{y_4} ()}="00001", (-10,50)*+{\maltese}="000000", (10,50)*+{\maltese}="000010", (15,20)*+{=}, (20,20)*+{N}
  \ar @{-}"0";"00" \ar @{-}"00";"000" \ar @{-}"000";"0000" \ar @{-}"0000";"00000" \ar @{-}"0000";"00001" \ar @{-}"00000";"000000" \ar @{-}"00001";"000010"
\end{xy}
\]
\caption{Examples of Pavaux's tree representation of c-designs}
\label{figone}
\end{figure}
The two aj-sequences
\[
x_0 | \ov{a} \langle y_1 , y_2 \rangle \;\; b^{y_1} ( x_1 ) \;\; x_1 | \ov{c} \langle \rangle \quad \text{and} \quad x_0 | \ov{a} \langle y_1 , y_2 \rangle \;\; b^{y_2} ( x_2 ) \;\; x_2 | \ov{c} \langle \rangle
\]
are views of $P$, namely the left-hand branch and the right-hand branch of $P$. The aj-sequence
\[
x_0 | \ov{a} \langle y_1 , y_2 \rangle \;\; b^{y_1} ( x_1 ) \;\; x_1 | \ov{c} \langle \rangle \;\; b^{y_2} ( x_2 ) \;\; x_2 | \ov{c} \langle \rangle
\]
is a path of $P$. Notice that views and paths of some c-design are indeed views and paths in the sense of Definitions \ref{def:views} and \ref{def:paths}.

As a further step toward the definition of interaction sequences, we define \textit{multi-designs}, which were introduced in \cite{pavaux2017c}. Multi-designs are generalisations of both c-designs and anti-designs. For the need of multi-designs in defining interaction sequences, see \cite[p.~41]{pavaux2017c}.
\begin{definition}[Multi-Designs]
  $(1)$ A negative multi-design is a finite set $\{ ( x_1 , N_1 ) , \ldots , ( x_k , N_k ) \}$ of pairs of a variable and a negative design such that $\fv{ N_1 } , \ldots , \fv{ N_k }$ are pairwise disjoint and $\fv{N_i} \cap \{ x_1 , \ldots , x_k \}$ is empty for any $i$ with $1 \leq i \leq k$. $(2)$ A positive multi-design is a finite set $\{ P , ( \vec{x} , \vec{N} ) \}$ such that $P$ is a positive design, $\{ ( \vec{x} , \vec{N} ) \} = \{ ( x_1 , N_1 ) , \ldots , ( x_k , N_k ) \}$ is a negative multi-design, $\fv{P}$ and $\fv{N_i}$ are disjoint for any $i$ with $1 \leq i \leq k$ and $\fv{P} \cap \{ x_1 , \ldots , x_k \} $ is empty.

  For any multi-design $\fD$, we define the normal form $\otv{ \fD }$ of $\fD$ as
  \[
  \otv{ \fD } := \{ ( x , \otv{N} ) : ( x , N ) \in \fD \} \cup \{ \otv{P} : P \in \fD \}.
  \]
\end{definition}

A multi-design $\fD$ is called \textit{standard} if any c-design in $\fD$ is standard. For any $( x , N )$ in some multi-design $\fD$, we denote $( x , N )$ by $[N / x]$. Moreover, when $\fD = \{ [ N_1 / x_1 ] , \ldots , [ N_k / x_k ] \}$ is a negative multi-design, we denote the result of substituting $N_i$ for $x_i$ in a multi-design $\fE$ for each $i$ by $\fE [\fD]$. Note that for any positive c-design $P$, $\{ P \}$ is a multi-design and that any anti-design is a multi-design. A \textit{view} of a multi-design $\fD$ is a view of some c-design in $\fD$. A \textit{path} of a multi-design $\fD$ is a path $s$ of the same polarity as $\fD$ such that for any prefix $s_0$ of $s$, $\view{s_0}$ is a view of $\fD$. For any multi-design $\fD$, we denote the set $\bigcup_{T \in \fD} \fv{T} $ of free variables in $\fD$ by $\fv{ \fD }$, and the set $\{ x : [N / x] \in \fD \text{ for some $N$} \}$ of \textit{negative places} of $\fD$ by $\np{ \fD }$.

For example, in Figure \ref{figtwo}, the singleton $\fD$ of the rightmost tree and the set $\fE$ of the remaining trees are multi-designs.
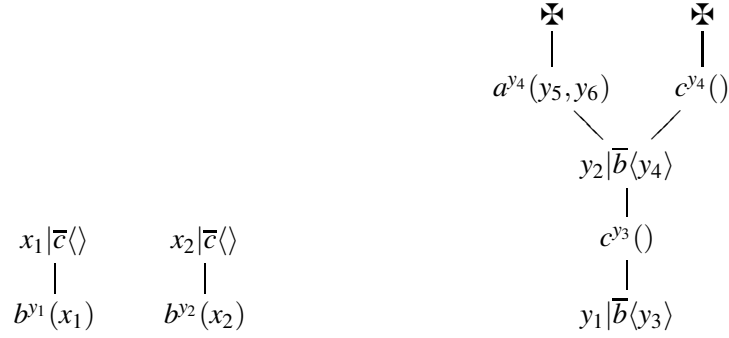
\begin{figure}[t]
\[
\begin{xy}
  (-10,10)*+{b^{y_1} ( x_1 )}="00", (10,10)*+{b^{y_2} ( x_2 )}="01", (-10,20)*+{x_1 | \ov{c} \langle \rangle}="000", (10,20)*+{x_2 | \ov{c} \langle \rangle}="010"
  \ar @{-}"00";"000" \ar @{-}"01";"010"
\end{xy}
\qquad \qquad \qquad \qquad
\begin{xy}
  (0,10)*+{y_1 | \ov{b} \langle y_3 \rangle}="00", (0,20)*+{c^{y_3} ()}="000", (0,30)*+{y_2 | \ov{b} \langle y_4 \rangle}="0000", (-10,40)*+{a^{y_4} ( y_5 , y_6 )}="00000", (10,40)*+{c^{y_4} ()}="00001", (-10,50)*+{\maltese}="000000", (10,50)*+{\maltese}="000010"
  \ar @{-}"00";"000" \ar @{-}"000";"0000" \ar @{-}"0000";"00000" \ar @{-}"0000";"00001" \ar @{-}"00000";"000000" \ar @{-}"00001";"000010"
\end{xy}
\]
\caption{Examples of Pavaux's tree representation of multi-designs}
\label{figtwo}
\end{figure}
In the $\lambda$-term-style notations, $\fD$ is $\{  y_1 | \ov{b} \langle c () . ( y_2 | \ov{b} \langle a ( y_5 , y_6 ) . \maltese + c () . \maltese \rangle ) \rangle \}$, and $\fE$ is $\{ (y_1 , b ( x_1 ) . ( x_1 | \ov{c} \langle \rangle ) ) , (y_2 , b ( x_2 ) . ( x_2 | \ov{c} \langle \rangle ) )\}$. Though $\fE$ is also an anti-design, $\fD$ is not an anti-design, because $\fD$ contains a non-atomic positive c-design. Notice that $\fD \cup \fE$ is not a multi-design, because the free variables $y_1 , y_2$ in $\fD$ belong to $\np{\fE}$.

The following definition provides some necessary conditions for the interaction between multi-designs.
\begin{definition}[Compatibility and Quasi Closed Compatibility]
  Two multi-designs $\fD$ and $\fE$ are compatible if and only if
  \begin{itemize}
  \item both $\fv{ \fD } \cap \fv{ \fE }$ and $\np{ \fD } \cap \np{ \fE }$ are empty, and

  \item either they are negative and there is a variable $x \in \np{ \fD } \cup \np{ \fE }$ such that $x \not \in \fv{ \fD } \cup \fv{ \fE }$ holds, or they are of opposite polarities.
  \end{itemize}
  Two multi-designs $\fD$ and $\fE$ are quasi closed compatible if and only if they are of opposite polarities, compatible and satisfy the condition that $\fv{ \fD } \subseteq \np{ \fE }$ and $\fv{ \fE } \subseteq \np{ \fD }$ hold.
\end{definition}

We can apply a \textit{cut} to any two compatible multi-designs. Though the notion of cut is not used in the definition of interaction sequences, this notion will be used to define the orthogonality on multi-designs and formulate Proposition \ref{asspaths} below.
\begin{definition}[Cut of Multi-Designs]
  For any two compatible multi-designs $\fD$ and $\fE$, the cut $\Cut{ \fD }{ \fE }$ of $\fD$ and $\fE$ is defined by induction on the number of elements in $\fE$: if $\fE$ is empty then we define $\Cut{ \fD }{ \fE } := \fD$. Let $\fE$ be non-empty.
    \begin{enumerate}
    \item If $P \in \fE$ holds, we put $S := \{ [M / y] \in \fD : y \in \fv{P} \}$ and define $\Cut{ \fD }{ \fE } := \Cut{ ( \fD \setminus S ) \cup \{ P[S] \} \: }{ \: \fE \setminus \{ P\} }$.
      
    \item If $[ N / x ] \in \fE$ holds, we put $S := \{ [M / y] \in \fD : y \in \fv{N} \}$ and define
      \begin{enumerate}
      \item $\Cut{ \fD }{ \fE } := \Cut{ ( \fD \setminus S ) \cup \{ [ N[S] / x ]\} \: }{ \: \fE \setminus \{ [N / x ]\} }$, if $x \not\in \fv{ \fD }$ holds,

      \item $\Cut{ \fD }{ \fE } := \Cut{ ( \fD \setminus S ) [ N[S] / x ] \: }{ \: \fE \setminus \{ [N / x ]\} }$, if $x \in \fv{ \fD }$ holds.
      \end{enumerate}
    \end{enumerate}
\end{definition}

The cut $\Cut{\fD}{\fE}$ of multi-designs $\fD , \fE$ is well-defined above because $\Cut{\fD}{\fE}$ is determined uniquely regardless of the order to apply 1, 2.(a) and 2.(b). We say that two quasi closed compatible multi-designs $\fD$ and $\fE$ are \textit{orthogonal} and write $\fD \bot \fE$ if $\maltese \in \otv{ \Cut{ \fD }{ \fE } }$ holds. Note that this definition of the orthogonality between two multi-designs is broader than the one in \cite[Definition 2.1.8]{pavaux2017c}, though this broader notion is in fact used in the proof of \cite[Proposition 2.2.12]{pavaux2017c}. The reason why we used the broader definition is that we want two multi-designs such as $\fD = \{ \maltese \}$ and $\fE = \{ [ N / x ] \}$ with $N$ closed to be orthogonal but they are not in the sense of \cite[Definition 2.1.8]{pavaux2017c}.

On the basis of the definitions above, we define the notion of interaction sequence.
\begin{definition}[Interaction Sequences]
  For any two standard multi-designs $\fD$ and $\fE$ such that they are quasi closed compatible, the interaction sequence $\iseq{ \fD }{ \fE }$ of $\fD$ with $\fE$ is defined as follows: let $P$ be the unique positive design in $\fD \cup \fE$.
  \begin{itemize}
  \item Let $P = \maltese$ be the case. We define $\iseq{ \fD }{ \fE } := \maltese$ if $P \in \fD$ holds, otherwise $\iseq{ \fD }{ \fE } := \epsilon$.

  \item If $P = \Omega$ holds then we define $\iseq{ \fD }{ \fE } := \epsilon$.

  \item Assume that $P = x | \ov{a} \langle \vec{M} \rangle$ holds. If $P \in \fD$ holds then there is a unique negative design $N$ such that $[N/x] \in \fE$ holds, otherwise there is a unique negative design $N$ such that $[N/x] \in \fD$. Let $N$ be of the form $\sum b ( \vec{y}_b ) . P_b$, and we define
    \[
    \iseq{ \fD }{ \fE } :=
    \begin{cases}
      x | \ov{a} \langle \vec{y}_a \rangle \iseq{ ( \fD \setminus \{ P\} ) \cup \{ [ \vec{M} / \vec{y}_a ] \} }{ ( \fE \setminus \{ [N / x]\} ) \cup \{ P_a \} }, & \text{if $P \in \fD$},\\
      a^x ( \vec{y}_a ) \iseq{ ( \fD \setminus \{ [N /x]\} ) \cup \{ P_a \} }{ ( \fE \setminus \{ P \} ) \cup \{ [ \vec{M} / \vec{y}_a ] \} }, & \text{else}.
    \end{cases}
    \]
  \end{itemize}
\end{definition}

During the construction of interaction sequences, one usually decomposes a multi-design. For example, consider the c-designs $P$ and $N$ in Figure \ref{figone} again. Then, the interaction sequence $\iseq{ \{ P \} }{ \{ [N / x_0] \} }$ is defined, where the first step for constructing it shortens $N$ to the rightmost tree in Figure \ref{figtwo} and decomposes $P$ into the remaining trees in Figure \ref{figtwo}. In addition, note that the path
\[
x_0 | \ov{a} \langle y_1 , y_2 \rangle \;\; b^{y_1} ( x_1 ) \;\; x_1 | \ov{c} \langle \rangle \;\; b^{y_2} ( x_2 ) \;\; x_2 | \ov{c} \langle \rangle
\]
of $P$ is equal to $\iseq{ \{ P \} }{ \{ [N / x_0] \} }$. Since $\iseq{ \fD }{ \fE } = \dual{ \iseq{ \fE }{ \fD } }$ holds for any orthogonal pair of standard multi-designs $\fD ,\fE$ (see \cite[Lemma 2.2.5]{pavaux2017c}) and the pair of $P , N$ is orthogonal, the path
\[
a^{x_0} ( y_1 , y_2 ) \;\; y_1 | \ov{b} \langle x_1 \rangle \;\; c^{x_1} ( ) \;\; y_2 | \ov{b} \langle x_2 \rangle \;\; c^{x_2} ( ) \;\; \maltese
\]
of $N$ is equal to $\dual{ \iseq{ \{ P \} }{ \{ [N / x_0] \} } } = \iseq{ \{ [N / x_0] \} }{ \{ P \} }$.

Next, we define visitable paths, which are interaction sequences induced by some orthogonal pair of a c-design and an anti-design. When $T$ is a positive c-design (resp. a negative c-design), we abbreviate $\iseq{ \{ T \} }{ [G] }$ (resp. $\iseq{ \{ ( x_0 , T )\} }{ [G] }$) as $\iseq{T}{ [G] }$, and $\iseq{ [G] }{ \{ T \} }$ (resp. $\iseq{ [G] }{ \{ (x_0 , T ) \} }$) as $\iseq{ [G] }{ T }$.
\begin{definition}[Visitable Paths]
  Let $\bd{T}$ be a set of standard c-designs of the same polarity, and $\bd{G}$ be a set of standard anti-designs of the same polarity and base.
  \begin{itemize}
  \item A path $p$ is visitable in $\bd{T}$ if and only if for some $T \in \bd{T}$ and $[G] \in \bd{T}^{\bot}$, $\iseq{T}{[G]} = p$ holds.

  \item A path $p$ is visitable in $\bd{G}$ if and only if for some $T \in \bd{G}^{\bot}$ and $[G] \in \bd{G}$, $\iseq{[G]}{T} = p$ holds.
  \end{itemize}
  We denote the set of all visitable paths in $\bd{T}$ $($resp. $\bd{G})$ by $V ( \bd{T} )$ $($resp. $V ( \bd{G} ) )$.
\end{definition}

We have the following lemma and proposition, which will be used in the proofs of Lemma \ref{mainlem}.(2) and Proposition \ref{protwo}. One can prove the assertions 1, 2 and 3 of the lemma below in the same way to Lemma 2.2.6, Lemma 2.2.10 and Lemma 3.1.5 in \cite{pavaux2017c}, respectively.
\begin{lemma}\label{formainone}
  We have the following assertions:
  \begin{enumerate}
  \item Let $\fD , \fE$ be multi-designs with $\iseq{\fD}{\fE}$ defined. If $\fD$ is positive then any non-empty prefix of $\iseq{\fD}{\fE}$ is a path of $\fD$, otherwise any prefix of $\iseq{\fD}{\fE}$ is a path of $\fD$. In particular, if $\iseq{\fD}{\fE}$ is finite, then it is a path of $\fD$.

  \item Assume that a positive multi-design $\fD$ with $\Omega \not\in \fD$ and a multi-design $\fE$ are cut-free and satisfy the following two conditions: $($i$)$ $\fD$ and $\fE$ are quasi closed compatible and have a finite interaction, and $($ii$)$ for any path $s \kappa^+$ of $\fD$ $($resp. $\fE )$ such that $\kappa^+$ is proper and $\ov{s}$ is a path of $\fE$ $($resp. $\fD )$, $\ov{ s \kappa^+ }$ is a path of $\fE$ $($resp. $\fD )$. Then, $\fD \bot \fE$ holds.

  \item Let $\bd{B}$ be an arbitrary a-behaviour. If $p \in V (\bd{B})$ holds, then for any positive-ended prefix $($resp. negative-ended prefix$)$ $s$ of $p$, we have $s \in V (\bd{B})$ $($resp. $s \maltese \in V (\bd{B}) )$.
  \end{enumerate}
\end{lemma}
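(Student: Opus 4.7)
The plan is to adapt the proofs of Lemmas 2.2.6, 2.2.10 and 3.1.5 of \cite{pavaux2017c} in turn, each handled by an induction tailored to the definition involved.

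For (1), I would proceed by induction on the length of a prefix $s$ of $\iseq{\fD}{\fE}$. The cases $s = \epsilon$ and $s = \maltese$ are immediate. Otherwise, the unique positive design in $\fD \cup \fE$ has the form $P = x|\ov{a}\langle\vec{M}\rangle$, and by the definition of interaction the first action $\kappa$ is either $x|\ov{a}\langle\vec{y}_a\rangle$ or its dual (with $\vec{y}_a$ fresh); this first action is trivially a view of the side that contains $P$, hence a path. For longer prefixes, the inductive hypothesis applies to the one-step successor pair $(\fD',\fE')$ obtained by consuming $P$ and opening the opposing negative c-design on the name $a$. What remains is to check that prefixing a path of $\fD'$ by $\kappa$ still yields a path of $\fD$: alternation, linearity and the daimon condition are preserved by construction of the interaction, and both visibility conditions for any subsequent action follow because its address must be one of the fresh variables $\vec{y}_a$ introduced at this step (or introduced even later), forcing its justifier to lie in the current view or anti-view.

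For (2), the plan is induction on the length of the finite interaction granted by (i). By (1), every prefix of $\iseq{\fD}{\fE}$ is a path of $\fD$; symmetrically of $\fE$ by swapping roles. Condition (ii) rules out the possibility of the interaction terminating in $\Omega$: whenever a positive action $\kappa^+$ is produced on one side, the joint fact that both $s\kappa^+$ and $\ov{s\kappa^+}$ remain paths forces the relevant branch of the facing negative c-design to be non-$\Omega$, so the $\beta$-style reduction step inside $\Cut{\fD}{\fE}$ goes through. Since the interaction is finite, it must eventually produce $\maltese$, and hence $\maltese \in \otv{\Cut{\fD}{\fE}}$, giving $\fD \bot \fE$.

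The main obstacle lies in (3). Fix $T \in \bd{B}$ and $[G] \in \bd{B}^\bot$ with $\iseq{T}{[G]} = p$, and let $s$ be a prefix of $p$. For a positive-ended $s$, I would construct $[G']$ from $[G]$ by replacing, at the address of the action of $[G]$ that would play the first negative step after $s$ in $p$, the whole subdesign by $\sum a(\vec{x}_a).\Omega$; the interaction $\iseq{T}{[G']}$ then halts precisely at $s$. For a negative-ended $s$, the analogous modification replaces the subdesign playing the next positive step in $p$ by $\maltese$, producing $\iseq{T}{[G']} = s\maltese$. The crux, as in Pavaux's Lemma 3.1.5, is to verify that $[G']$ still belongs to $\bd{B}^\bot$. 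Standardness survives by inspection of the modification, while orthogonality to an arbitrary $T'' \in \bd{B}$ is reduced, via Theorem \ref{ass}, to comparing the normal form of $\Cut{T''}{[G]}$ with that of $\Cut{T''}{[G']}$ on the common branches they explore: the former is $\maltese$ by hypothesis, and the only effect of our substitution is either to insert $\Omega$ at a position reached strictly beyond the truncation point for $T''$ (preserving convergence), or to insert $\maltese$ (which only speeds convergence up). This is the delicate verification that will occupy most of the work.
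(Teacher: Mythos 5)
The paper does not actually prove this lemma; it only states that assertions 1--3 follow ``in the same way'' as Lemmas 2.2.6, 2.2.10 and 3.1.5 of Pavaux's thesis, so your proposal is being measured against those intended adaptations rather than against a written argument. Your sketches for (1) and (2) are consistent with that plan: the induction on prefixes tracking the one-step unfolding of the interaction, and the use of condition (ii) to rule out the interaction halting on an $\Omega$-branch (since $\ov{s\kappa^+}$ being a path of $\fE$ forces the corresponding branch of the facing negative design to be non-$\Omega$ by the definition of $\viewd{\cdot}$), are the right ingredients.

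Part (3), however, contains a genuine error. For a positive-ended prefix $s$ of $p=\iseq{T}{[G]}$ you propose to build $[G']$ by replacing the responding subdesign of $[G]$ with $\sum a(\vec{x}_a).\Omega$. This does make $\iseq{T}{[G']}=s$, but it destroys exactly the property you need: the normalisation of $T$ against $[G']$ now diverges at that very point, so $T\not\bot[G']$ and hence $[G']\notin\bd{B}^{\bot}$ (recall $\bd{B}^{\bot}\subseteq\{T\}^{\bot}$). Your closing claim that the inserted $\Omega$ sits ``strictly beyond the truncation point'' is self-defeating, since the truncation point is by construction the position $T$ itself reaches. The correct move is to insert $\maltese$, not $\Omega$: by the definition of interaction sequences, when the positive design becomes $\maltese$ on the $[G']$ side the sequence stops with $\epsilon$ appended, yielding $\iseq{T}{[G']}=s$ \emph{and} $\maltese\in\otv{\Cut{\{T\}}{[G']}}$; membership of $[G']$ in $\bd{B}^{\bot}$ then follows because $[G]\preceq[G']$ and behaviours are upward closed under the observational ordering, which makes the ``delicate verification'' essentially immediate. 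There is also a directional slip in the negative-ended case: after a negative-ended prefix the next move belongs to $T$'s side, not to $[G]$'s, so to realise $s\maltese$ one must daimonise a subdesign of $T$ (obtaining $T'\succeq T$, hence $T'\in\bd{B}$) and keep $[G]$ unchanged; writing the witness as $\iseq{T}{[G']}$ with a modified $[G']$ cannot produce a $\maltese$-ended sequence, since $\maltese$ is only recorded when it is played by the $\fD$-side.
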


When $s = \kappa_1 \cdots \kappa_n$ is a sequence of actions, a \textit{subsequence} of $s$ is a sequence $\kappa_{i_1} \cdots \kappa_{i_k}$ with $1 \leq i_1 < \cdots < i_k \leq n$, and we denote by $s \uhr s '$ the subsequence of $s$ any of whose actions occurs in a sequence $s'$. Moreover, when $p$ is a path of a multi-design $\fD$ and $\fE$ is a multi-design with $\fE \subseteq \fD$, we denote the longest subsequence of $p$ that is a path of $\fE$ by $p \uhr \fE$. For a proof of the following proposition, see \cite[Proposition 2.2.12]{pavaux2017c}.
\begin{proposition}[Associativity for Paths]\label{asspaths}
  For any cut-free multi-designs $\fD, \fE$ and $\fk{F}$ such that $\fE \cup \fk{F}$ is a multi-design with $\fE$ and $\fk{F}$ disjoint, if $\fD \bot ( \fE \cup \fk{F} )$ holds, then $\iseq{ \fE }{ \otv{ \Cut{ \fk{F} }{ \fD } } } = \iseq{ \fE \cup \fk{F} }{ \fD } \uhr \fE$ holds.
\end{proposition}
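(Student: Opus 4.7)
The plan is to prove the identity by induction on the length of $\iseq{\fE \cup \fk{F}}{\fD}$, which is finite because $\fD \bot (\fE \cup \fk{F})$ guarantees that the normalization of the cut converges to $\maltese$. At each step of the construction of $\iseq{\fE \cup \fk{F}}{\fD}$ there is a unique positive c-design $P$ in $\fD \cup \fE \cup \fk{F}$, and it interacts with a unique negative partner $[N/x]$ of the opposite group. Since $\fE$ and $\fk{F}$ have the same polarity, the partner of $P$ lies in $\fD$ whenever $P \in \fE \cup \fk{F}$, and conversely. This gives four exhaustive cases: (a) $P \in \fE$, $[N/x] \in \fD$; (b) $P \in \fk{F}$, $[N/x] \in \fD$; (c) $P \in \fD$, $[N/x] \in \fE$; (d) $P \in \fD$, $[N/x] \in \fk{F}$.

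The base case is when the interaction sequence is $\maltese$ or $\epsilon$: the sole positive c-design is $\maltese$, and depending on which of $\fD, \fE, \fk{F}$ it sits in, both sides reduce to the same expression directly from the definitions of $\iseq{\cdot}{\cdot}$, $\Cut{\cdot}{\cdot}$, and $\uhr \fE$. In the inductive step, cases (a) and (c) produce an action $\kappa$ that is present in both sides of the equation: on the right it survives the restriction to $\fE$, and on the left it is exactly the first action of $\iseq{\fE}{\otv{\Cut{\fk{F}}{\fD}}}$ because the designs from $\fE$ are left intact (up to substitution of variables by negative c-designs arising from $\Cut{\fk{F}}{\fD}$), and associativity of normalization (Theorem \ref{ass}) ensures that these substituted designs agree with the head of the corresponding normal form. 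Once the heads are aligned, the tails reduce to an instance of the induction hypothesis applied to the updated multi-designs obtained by the usual one-step unfolding $\sum b(\vec{y}_b).P_b \mapsto P_a$.

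In cases (b) and (d), the action $\kappa$ only involves $\fk{F}$ and $\fD$, so it disappears on the right-hand side under $\uhr \fE$ and must be absorbed on the left-hand side into the normalization $\otv{\Cut{\fk{F}}{\fD}}$. The argument here is to group consecutive $(\fk{F},\fD)$-actions in $\iseq{\fE \cup \fk{F}}{\fD}$ into maximal silent blocks and show that each such block corresponds precisely to a head-normalization chunk of $\Cut{\fk{F}}{\fD}$. Associativity of $\otv{\cdot}$ (Theorem \ref{ass}) lets one commute the substitutions induced by these blocks with the subsequent interaction with $\fE$, and Lemma \ref{formainone}.(1) guarantees that the prefixes produced along the way are genuine paths, so that the induction hypothesis can be applied to the residual triple $(\fE, \fk{F}', \fD')$ obtained after absorbing one such block.

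The main obstacle I expect is the bookkeeping of free variables and binding during the silent $(\fk{F},\fD)$ blocks: an action in (b) or (d) can substitute a negative c-design from $\fk{F}$ into $\fD$ (or vice versa) and thereby expose new variables that are themselves free in designs from $\fE$, so one must carefully maintain compatibility and the quasi-closed-compatibility hypothesis throughout the induction, keeping Barendregt's convention in force after each unfolding step. Once this bookkeeping is formalized and the block-decomposition is justified, the equality follows by matching, action by action, the two sides and concluding with the induction hypothesis on the suffix.
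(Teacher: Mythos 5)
The paper does not actually prove this proposition: it defers entirely to Pavaux (Proposition 2.2.12 of \cite{pavaux2017c}), so there is no in-paper argument to contrast with yours. Your strategy---induction on the length of $\iseq{\fE \cup \fk{F}}{\fD}$ (finite by orthogonality), a four-way case split on where the current positive design and its negative partner live, letting the $\fE$-actions survive on both sides, and absorbing the maximal silent $(\fk{F},\fD)$-blocks into the head normalisation of $\Cut{\fk{F}}{\fD}$ via Theorem \ref{ass}---is essentially the argument of that cited proof. The one caveat is that your sketch asserts rather than establishes the central correspondence between a maximal silent block and a head-normalisation chunk of $\Cut{\fk{F}}{\fD}$, and the preservation of (quasi-closed) compatibility and of the $\fE$/$\fk{F}$-ancestry partition after each unfolding step; that is where essentially all of the work in Pavaux's proof sits, and you correctly identify it as the main obstacle rather than glossing over it.
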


To define regular behaviours, we first define the \textit{stable} and \textit{observational orderings}, \textit{intersection} and \textit{incarnation}.
\begin{definition}[Stable Ordering and Observational Ordering]
  The stable ordering $\ssse$ on c-designs is defined as the largest binary relation on c-designs such that if $T \ssse U$ holds then one of the following conditions holds:
  \begin{enumerate}
  \item $T = \maltese = U$,

  \item $T = \Omega$ and $U \in \cD^+$,

  \item $T = N_0 | \ov{a} \langle N_1 ,\ldots , N_n \rangle$, $U = M_0 | \ov{a} \langle M_1 ,\ldots , M_n \rangle$ and $N_k \ssse M_k$ for any $k$ with $0 \leq k \leq n$,

  \item $T = x = U$,

  \item $T = \sum a (\vec{y^a}) .P_a$, $U = \sum a ( \vec{y^a} ).Q_a$ and $P_a \ssse Q_a$ for any $a \in A$.

  \end{enumerate}
  The observational ordering $\preceq$ on c-designs is defined as the largest binary relation on c-designs such that if $T \preceq U$ holds then one of 1, 2, 4 above and the following conditions 3' and 5' holds:
  \begin{enumerate}
  \item[3'.] $T = N_0 | \ov{a} \langle N_1 ,\ldots , N_n \rangle$ and either $U = M_0 | \ov{a} \langle M_1 ,\ldots , M_n \rangle$ and $N_k \preceq M_k$ for any $k$ with $0 \leq k \leq n$ or $U = \maltese$,

  \item[5'.] $T = \sum a (\vec{y^a}) .P_a$, $U = \sum a ( \vec{y^a} ).Q_a$ and $P_a \preceq Q_a$ for any $a \in A$.
  \end{enumerate}
\end{definition}

\begin{definition}[Intersection]
The intersection $T \cap U$ of c-designs $T$ and $U$ is defined by corecursion:
\begin{enumerate}
\item $\maltese \cap \maltese = \maltese$,

\item $\Omega \cap P = P \cap \Omega = \Omega$,

\item $x \mid \ov{a}\langle N_1 ,\ldots , N_k \rangle \cap x \mid \ov{a}\langle M_1 ,\ldots , M_k \rangle = x \mid \ov{a} \langle N_1 \cap M_1 ,\ldots , N_k \cap M_k \rangle$ if $N_i \cap M_i$ is defined for each $i$ with $1 \leq i \leq k$,

\item $\sum a (\vec{x}_a) .P_a \cap \sum a (\vec{x}_a) .Q_a = \sum a (\vec{x}_a) .P_a \cap Q_a$ if $P_a \cap Q_a$ is defined for each $a \in A$.

\item In other cases, $T \cap U$ is not defined.
\end{enumerate}
\end{definition}

The stable ordering $T \ssse U$ means that $U$ is more defined than $T$. On the other hand, the observational ordering $T \preceq U$ means that $U$ is more likely to converge than $T$ when they interact with other c-designs. The intersection $T \cap U$ corresponds to the common part of $T$ and $U$ delineated by $\Omega$ if $T \cap U$ is defined.
\begin{definition}[Incarnation]
  Let $\bd{T}$ be a behaviour. The incarnation $\val{U}_{\bd{T}}$ of $U$ in $\bd{T}$ is defined as $\val{U}_{\bd{T}} := \bigcap \{ U ' \in \bd{T} : U ' \ssse U \}$. We say $U$ is material in $\bd{T}$ if and only if $U = \val{U}_{\bd{T}}$ holds, and denote the set of all material designs in $\bd{T}$ by $\val{\bd{T}}$.
\end{definition}

Note that $\val{U}_{\bd{T}}$ is the minimal c-design $U'$ in $\bd{T}$ such that $U' \ssse U$ holds. Next, we define the \textit{shuffle} of two paths, which is a set of paths made by interleaving the actions in $p$ and $q$. Shuffling two paths is a key ingredient for the regularity of behaviour.
\begin{definition}[Shuffles]
  The shuffle of two paths and the shuffle of two sets of paths are sets of paths defined as follows:
  \begin{enumerate}
  \item Let $p$ and $q$ be paths. The shuffle $p \shuffle q$ of $p$ and $q$ is defined by distinguishing cases:
    \begin{itemize}
    \item If $p$ and $q$ are negative paths, then $p \shuffle q$ is the set of all paths $r$ such that any action in $r$ occurs in $p$ or $q$ and both of $r \uhr p = p$ and $r \uhr q = q$ hold,

    \item if $p$ and $q$ are positive paths of the same first action $\kappa^+$, that is, $p = \kappa^+p'$ and $q = \kappa^+q'$ hold, then $p \shuffle q$ is the set of all paths $r$ such that $r = \kappa^+ u$ holds for some $u \in p ' \shuffle q' $, and

    \item otherwise, $p \shuffle q$ is not defined.
    \end{itemize}
  \item Let $D$ and $D'$ be sets of paths. The shuffle $D \shuffle D'$ of $D$ and $D'$ is defined as the set of all paths $q$ such that for some $p \in D$ and $p' \in D'$ with $p \shuffle p'$ defined, $q \in p \shuffle p'$ holds.
  \end{enumerate}
\end{definition}

For instance, the shuffle of $x_1 | \ov{b} \langle y_1 , y_2 \rangle \;\;  a^{y_1} ( y_3 ) \;\; y_3 | \ov{c} \langle \rangle$ and $x_1 | \ov{b} \langle y_1 , y_2 \rangle \;\;  a^{y_2} ( y_4 ) \;\; y_4 | \ov{d} \langle \rangle$ is
\begin{center}
$\{ \Bigl( x_1 | \ov{b} \langle y_1 , y_2 \rangle \;\;  a^{y_1} ( y_3 ) \;\; y_3 | \ov{c} \langle \rangle \;\; a^{y_2} ( y_4 ) \;\; y_4 | \ov{d} \langle \rangle \Bigr) , \Bigl( x_1 | \ov{b} \langle y_1 , y_2 \rangle \;\; a^{y_2} ( y_4 ) \;\; y_4 | \ov{d} \langle \rangle \;\;  a^{y_1} ( y_3 ) \;\; y_3 | \ov{c} \langle \rangle \Bigr) \}$.
\end{center}
Note that $\epsilon \in D \shuffle D'$ holds if and only if $\epsilon$ belongs to both of $D$ and $D'$. We omit parentheses in consecutive application of $\shuffle$ because $p \shuffle q$ and $D \shuffle D'$ are associative.
\begin{definition}[Regular Behaviours]
  An a-behaviour $\bd{B}$ is regular if and only if the following conditions are satisfied: $(1)$ for any $T \in \val{ \bd{B} }$ and any positive-ended path $p$ of $T$, $p \in V ( \bd{B} )$ holds, $(2)$ for any $T \in \val{ \bd{B}^{\bot} }$ and any positive-ended path $p$ of $T$, $p \in V (\bd{B}^{\bot} )$ holds, and $(3)$ both $V ( \bd{B} )$ and $V ( \bd{B}^{\bot} )$ are closed under $\shuffle$.
\end{definition}

Roughly speaking, a behaviour $\bd{B}$ is regular if $\bd{B}$ is atomic and any positive-ended path of its material c-designs belongs to $V (\bd{B} )$, which is closed under $\shuffle$, and a similar condition holds for $\bd{B}^{\bot}$. When $\bd{N}$ is a negative a-behaviour, we denote by $V (x, \bd{N})$ the set of all paths obtained by replacing the address $x_0$ of the first actions of paths in $V ( \bd{N} )$ with $x$. Inspired by the results in \cite[\S~3.2]{pavaux2017c}, we define the second condition equivalent to the harmony condition as follows. This condition says that visitable paths made by a connective $\alpha$ from regular behaviours are dually decomposable as in our first condition.
\begin{definition}[Dual Decomposability of Visitable Paths]
  Let $\alpha$ be an $n$-ary connective. Visitable paths of $\alpha$ are dually decomposable if and only if for any negative regular behaviours $\bd{N}_1 ,\ldots, \bd{N}_n$ and any positive a-behaviours $\bd{P}_1 ,\ldots , \bd{P}_n$, we have
    \begin{itemize}
    \item $V ( \alpha^E \langle \bd{N}_1 ,\ldots, \bd{N}_n \rangle )  =   \{ \maltese \} \cup \bigcup_{a_i ( \vec{x}_i ) \in \alpha^I} x_0 | \ov{a_i} \langle \vec{x}_i\rangle ( V ( x_{(i,1)} , \bd{N}_{(i,1)} ) \shuffle \cdots \shuffle V ( x_{(i,k)} , \bd{N}_{(i,k)} ) )$,

    \item $V ( \alpha^I ( \bd{P}_1 ,\ldots, \bd{P}_n ) )  =  \{ \epsilon \} \cup \bigcup_{a_i ( \vec{x}_i )\in \alpha^E} a^{x_0}_i ( \vec{x}_i ) V ( [ (\bd{P}_{(i,1)})^{\bot} / x_{(i,1)},\ldots, (\bd{P}_{(i,k)})^{\bot} / x_{(i,k)} ]^{\bot} )$.
      
    \end{itemize}
\end{definition}

For any connective $\alpha = (\vec{z} , \alpha^I ,\alpha^E )$, we denote the connective $( \vec{z} , \beta^I , \beta^E )$ with $\beta^I = \alpha^E$ and $\beta^E = \alpha^I$ by $\alpha_{\bot}$. Intuitively, $\alpha_{\bot}$ is a connective whose introduction and elimination rules are harmonious with $\alpha^E$ and $\alpha^I$, respectively. When $p$ is a path of a c-design $T$, let $p^c$ be the c-design obtained by replacing with $\maltese$ any positive subdesign $P$ of $T$ such that $P = \Omega$ holds or the first action of elements of $\viewd{P}$ does not occur in $p$. Note that $p^c$ is a unique $\preceq$-maximal c-design $U$ such that $p$ is a path of $U$. We define the \textit{bi-view} $\langle s \rangle$ of an aj-sequence $s$ as $\langle \epsilon \rangle := \epsilon$, $\langle s \maltese \rangle = \langle s \rangle \maltese$ and
\[
\langle s \kappa \rangle :=
\begin{cases}
  \kappa , & \text{if $\kappa$ is initial in $s$ and $\kappa \neq \maltese$}, \\
  \langle s_0 \rangle \kappa , & \text{if $\kappa$ is justified by the last action of $s_0$ and $\kappa \neq \maltese$}.
\end{cases}
\]
The main lemma for our second characterisation of the harmony condition is as follows:
\begin{lemma}\label{mainlem}
  We have the following assertions:
  \begin{enumerate}
    
  \item If a connective $\alpha$ is dually decomposable, then $\alpha_{\bot}$ is dually decomposable.

  \item For any negative regular behaviours $\bd{N}_1 ,\ldots, \bd{N}_n$, we have
    \[
    V ( [ \bd{N}_{1} / x_{1},\ldots, \bd{N}_{n} / x_{n} ] ) =  V ( x_1 , \bd{N}_1 ) \shuffle \cdots \shuffle V ( x_n , \bd{N}_n ) .
    \]

    
  \end{enumerate}
\end{lemma}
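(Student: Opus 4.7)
The strategy is to show that the dual decomposability of $\alpha$ already forces the harmony condition $\alpha^I = \alpha^E$; then $\alpha_{\bot} = (\vec{z}, \alpha^E, \alpha^I) = (\vec{z}, \alpha^I, \alpha^E) = \alpha$ as tuples, and the dual decomposability of $\alpha_{\bot}$ is immediate. For $\alpha^E \subseteq \alpha^I$, pick $a^\star \in \alpha^E$ and nonempty negative a-behaviours $\bd{N}_1, \ldots, \bd{N}_n$: the c-design $x_0 \mid \ov{a^\star}\langle \vec{M}\rangle$ with $M_j \in \bd{N}_{(\star,j)}$ belongs to $\alpha^E\langle \bd{N}_1, \ldots, \bd{N}_n\rangle$, hence by condition~1 to $\alpha^I(\bd{N}_1^{\bot}, \ldots, \bd{N}_n^{\bot})^{\sC} \cup \{\maltese\}$, and since the names in $\alpha^I \cup \alpha^E$ are pairwise distinct we must have $a^\star \in \alpha^I$. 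For the converse, suppose $a^* \in \alpha^I \setminus \alpha^E$ and consider $Q = x_0 \mid \ov{a^*}\langle \vec{R}\rangle$ with $R_j \in \bd{N}_{(*,j)}$: by condition~1 this $Q$ lies in $(\bigcup_{a_i \in \alpha^E}\ov{a_i}\langle \bd{N}_{(i,1)},\ldots,\bd{N}_{(i,k)}\rangle)^{\bot\bot}$. I construct the standard atomic $N = \sum a(\vec{y}_a).R_a$ by setting $R_{a_i} := \maltese$ for every $a_i \in \alpha^E$ and $R_a := \Omega$ otherwise; then $[N/x_0]$ belongs to $(\bigcup_{a_i \in \alpha^E}\ov{a_i}\langle \bd{N}_{(i,1)},\ldots,\bd{N}_{(i,k)}\rangle)^{\bot}$, yet $Q[N/x_0]$ reduces to $\Omega$, contradicting $Q$'s biorthogonality. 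Hence $\alpha^I = \alpha^E$ and $\alpha_\bot = \alpha$ is dually decomposable.

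\textbf{Plan for part (2), inclusion $\subseteq$.} Take $p \in V([\bd{N}_1/x_1, \ldots, \bd{N}_n/x_n])$, written as $p = \iseq{[N_1/x_1, \ldots, N_n/x_n]}{\{P\}}$ for suitable $N_i \in \bd{N}_i$ and $P \in [\bd{N}_1/x_1,\ldots,\bd{N}_n/x_n]^{\bot}$. For each $i$, I apply Proposition~\ref{asspaths} with $\fE = [N_i/x_i]$, $\fk{F} = \bigcup_{j \neq i}[N_j/x_j]$ and $\fD = \{P\}$, obtaining
\[
p \uhr [N_i/x_i] = \iseq{[N_i/x_i]}{\otv{\Cut{\bigcup_{j\neq i}[N_j/x_j]}{\{P\}}}}.
\]
Writing $Q_i := \otv{\Cut{\bigcup_{j\neq i}[N_j/x_j]}{\{P\}}}$, the orthogonality of $P$ with every $[N_1/x_1, \ldots, M/x_i, \ldots, N_n/x_n]$ for $M \in \bd{N}_i$ combined with Theorem~\ref{ass} shows that $Q_i$, after renaming its free variable $x_i$ to $x_0$, lies in $\bd{N}_i^{\bot}$. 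Thus $p \uhr [N_i/x_i] \in V(x_i, \bd{N}_i)$, and the disjointness of the negative places of the components $[N_i/x_i]$ inside the multi-design $[N_1/x_1,\ldots,N_n/x_n]$ ensures $p \in V(x_1,\bd{N}_1) \shuffle \cdots \shuffle V(x_n, \bd{N}_n)$.

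\textbf{Plan for part (2), inclusion $\supseteq$, and main obstacle.} Start with $p_i = \iseq{[N_i/x_i]}{Q_i}$ witnessing $p_i \in V(x_i,\bd{N}_i)$, where $N_i \in \bd{N}_i$ and $Q_i \in \bd{N}_i^{\bot}$ are re-addressed at $x_i$, and fix a shuffle $p \in p_1 \shuffle \cdots \shuffle p_n$. The target is a positive standard c-design $P \in [\bd{N}_1/x_1,\ldots,\bd{N}_n/x_n]^{\bot}$ with $\iseq{[N_1/x_1,\ldots,N_n/x_n]}{\{P\}} = p$. I plan to build $P$ by induction on the length of $p$, weaving the trees of the $Q_i$'s together: at each step the shuffle indicates which $Q_i$ is currently active and hence which positive action to emit, producing a well-defined cut-free positive c-design whose interaction with $[N_1/x_1,\ldots,N_n/x_n]$ reproduces $p$ by construction. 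The main obstacle is upgrading orthogonality from the single anti-design $[N_1/x_1,\ldots,N_n/x_n]$ to the full class $[\bd{N}_1/x_1,\ldots,\bd{N}_n/x_n]$. My plan is to apply Lemma~\ref{formainone}.2 so as to reduce orthogonality to a condition on paths, and then exploit the regularity of each $\bd{N}_i$: the closure of $V(\bd{N}_i^{\bot})$ under $\shuffle$ together with the fact that every positive-ended path of a material c-design of $\bd{N}_i^{\bot}$ is visitable will certify the required path conditions of $P$ against any $[N_1'/x_1, \ldots, N_n'/x_n]$ with $N_i' \in \bd{N}_i$.
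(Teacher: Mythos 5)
Your part (1) is correct but proceeds by a genuinely different route. The paper verifies the two dual-decomposability conditions for $\alpha_{\bot}$ directly, by computing $( \alpha_{\bot} )^I$ and $( \alpha_{\bot} )^E$ through a chain of (bi)orthogonal identities. You instead show that condition 1 of dual decomposability already forces $\alpha^I = \alpha^E$, so that $\alpha_{\bot} = \alpha$ and there is nothing left to prove. Both of your witness arguments check out: for $\alpha^E \subseteq \alpha^I$, a design $x_0 | \ov{a^\star}\langle \vec{M}\rangle$ exists because negative a-behaviours are nonempty, and the pairwise distinctness of names in $\alpha^I \cup \alpha^E$ upgrades ``same name'' to ``same negative action''; for $\alpha^I \subseteq \alpha^E$, your design $N$ with $\maltese$ on the $\alpha^E$-branches and $\Omega$ elsewhere is standard and lies in $( \bigcup_{a_i (\vec{x}_i) \in \alpha^E} \ov{a_i}\langle \cdots \rangle )^{\bot}$ (note it is $N$ itself, not $[N/x_0]$, that belongs there, since the orthogonal of a set of atomic designs is again a set of designs), and it indeed refutes the biorthogonality of $Q$. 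What this buys is economy — and it makes the implication from dual decomposability to harmony explicit without passing through visitable paths — at the price of not exhibiting the identities for $( \alpha_{\bot} )^I$ and $( \alpha_{\bot} )^E$ that the paper reuses later (e.g.\ the equation $(\ast)$ in the proof of Proposition \ref{protwo}.(1)). Your part (2), direction $\subseteq$, is essentially the paper's argument (Proposition \ref{asspaths} with $\fE = [N_i / x_i]$, then Theorem \ref{ass} to place $\otv{P[G']}$ in $\bd{N}_i^{\bot}$).

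The direction $\supseteq$ of part (2) has a genuine gap: the step you flag as the ``main obstacle'' is the entire content of the proof, and your one-sentence plan for it does not work as stated. First, the witness design matters. A $P$ obtained by ``weaving the trees of the $Q_i$'s'' carries unexplored subtrees (or $\Omega$-branches) off the path $p$, and its behaviour against an arbitrary $[N_1'/x_1 ,\ldots , N_n'/x_n]$ is uncontrolled precisely there; the paper instead takes $P = \dual{p}^{\: c}$, the completion by $\maltese$, so that every deviation of the opponent from $\dual{p}$ is immediately answered by $\maltese$ and the only possible failure of orthogonality is that some component $N_j'$ refuses a negative action proposed along $p$. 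Second, ruling out that failure is not a direct application of Lemma \ref{formainone}.(2) plus regularity: the paper needs a minimal-counterexample induction (choose a shortest path $t$ at which some $[\vec{N}] \in \bd{G}$ fails against $\dual{p}^{\: c}$, and show via claim (a) that $\dual{t}^{\; c} \in \bd{G}^{\bot}$, so that $t$ itself lies in the shuffle), followed by a bi-view computation (claim (c)) transporting the offending negative action $\kappa^-$ from the shuffle back into a single component $V(x_j , \bd{N}_j)$. Third, you invoke the wrong halves of regularity: what is needed is that positive-ended paths of material designs of $\bd{N}_j$ are visitable in $\bd{N}_j$ and that $V(x_j , \bd{N}_j)$ is closed under $\shuffle$ — conditions on $\bd{N}_j$, not on $\bd{N}_j^{\bot}$ — because the object that must be shown to accept the next action is the opposing component $N_j \in \bd{N}_j$, not the design you assembled from the $Q_i \in \bd{N}_i^{\bot}$. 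As it stands, your sketch would let you conclude $\iseq{[\vec{N}]}{P} = p$ for the particular $[\vec{N}]$ you started from, but not $P \in \bd{G}^{\bot}$, which is what membership of $p$ in $V(\bd{G})$ requires.
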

\begin{proof}
  (1.) Assume that $\alpha$ is dually decomposable. We have
  \begin{eqnarray*}
    ( \alpha_{\bot} )^I ( \bd{P}_1 ,\ldots , \bd{P}_n )  & = & \bigcap_{ a_i (\vec{x}_i ) \in \alpha^E } ( \ov{a_i} \langle \bd{P}_1^{\bot} ,\ldots , \bd{P}_n^{\bot} \rangle^{\bot} )  =  ( \bigcup_{ a_i (\vec{x}_i ) \in \alpha^E } \ov{a_i} \langle \bd{P}_1^{\bot} ,\ldots , \bd{P}_n^{\bot} \rangle )^{\bot} \\
    & = & \alpha^E \langle \bd{P}_1^{\bot} ,\ldots , \bd{P}_n^{\bot} \rangle^{\bot}  =  ( \alpha^I (\bd{P}_1 ,\ldots , \bd{P}_n )^{\sC} \cup \{ \maltese \} )^{\bot} \\
    & = & ( \alpha^I (\bd{P}_1 ,\ldots , \bd{P}_n )^{\sC} )^{\bot} \cap \{ \maltese \}^{\bot}  =  ( \alpha^I (\bd{P}_1 ,\ldots , \bd{P}_n )^{\sC} )^{\bot} \\
    & = & \bigcap_{ a_i (\vec{x}_i ) \in \alpha^I } ( \ov{a_i} \langle \bd{P}_1^{\bot} ,\ldots , \bd{P}_n^{\bot} \rangle^{\bot} ) = \bigcap_{ a_i (\vec{x}_i ) \in ( \alpha_{\bot} )^E } ( \ov{a_i} \langle \bd{P}_1^{\bot} ,\ldots , \bd{P}_n^{\bot} \rangle^{\bot} ) ,
  \end{eqnarray*}
  so the one half of the dual decomposability of $\alpha_{\bot}$ holds. On the other hand, we have
  \begin{eqnarray*}
    ( \alpha_{\bot} )^E \langle \bd{N}_1 ,\ldots , \bd{N}_n \rangle & = & ( \bigcup_{ a_i ( \vec{x}_i  ) \in \alpha^I } \ov{a_i} \langle \bd{N}_1 ,\ldots , \bd{N}_n \rangle )^{\bot\bot} = ( \bigcap_{ a_i ( \vec{x}_i  ) \in \alpha^I } ( \ov{a_i} \langle \bd{N}_1 ,\ldots , \bd{N}_n \rangle^{\bot} )  )^{\bot} \\
    & = & \alpha^I ( \bd{N}_1^{\bot} ,\ldots , \bd{N}_n^{\bot}  )^{\bot}.
  \end{eqnarray*}
  It is obvious that $( \bigcup_{ a_i ( \vec{x}_i ) \in \alpha^E } \ov{a_i} \langle \bd{N}_1 ,\ldots , \bd{N}_n \rangle ) \cup \{ \maltese \} \subseteq \alpha^I ( \bd{N}_1^{\bot} ,\ldots , \bd{N}_n^{\bot}  )^{\bot}$ holds, because $\alpha$ is dually decomposable. Conversely, assume that $P \in \alpha^I ( \bd{N}_1^{\bot} ,\ldots , \bd{N}_n^{\bot}  )^{\bot}$ holds. If $P = \maltese$ holds then the assertion is obvious, so let $P = x_0 | \ov{b} \langle M_1 , \ldots , M_m \rangle$ be the case. By the dual decomposability of $\alpha$ again, we have $\sum_{a_i ( \vec{x}_i ) \in \alpha^E}. \maltese \in \alpha^I ( \bd{N}_1^{\bot} ,\ldots , \bd{N}_n^{\bot}  )$. There is a negative action $a_j ( \vec{x}_j ) \in \alpha^E$ with $P = x_0 | \ov{a_j} \langle M_{(j , 1)} , \ldots , M_{(j , k)} \rangle$ because $P \bot \sum_{a_i ( \vec{x}_i ) \in \alpha^E}. \maltese$ holds. Moreover, for any $Q \in \bd{N}_{(j,l)}^{\bot}$, $N := a_j ( \vec{x}_j ) . Q [x_{(j,l)} / x_0 ] + b_1 ( \vec{x}_{b_1} ). \maltese + \cdots + b_{m} ( \vec{x}_{b_m} ).\maltese$ belongs to $\alpha^I ( \bd{N}_1^{\bot} ,\ldots , \bd{N}_n^{\bot}  )$, where $\alpha^E \setminus \{ a_j ( \vec{x}_j ) \} = \{ b_1 ( \vec{x}_{b_1} ),\ldots , b_m ( \vec{x}_{b_m} )\}$ holds. We have $P [N / x_0] \rwt Q[M_{(j , l)} / x_{(j , l)}] \rwt \maltese$ by $P \bot N$, hence $M_{(j , l)} \in \bd{N}_{(j,l)}$ holds for any $l$ with $1 \leq l \leq k$. Therefore, $P \in \bigcup_{ a_i ( \vec{x}_i ) \in \alpha^E } \ov{a_i} \langle \bd{N}_1 ,\ldots , \bd{N}_n \rangle$ holds. Now we have $( \bigcup_{ a_i ( \vec{x}_i ) \in \alpha^E } \ov{a_i} \langle \bd{N}_1 ,\ldots , \bd{N}_n \rangle ) \cup \{ \maltese \} = \alpha^I ( \bd{N}_1^{\bot} ,\ldots , \bd{N}_n^{\bot}  )^{\bot}$, so it follows that $( \alpha_{\bot} )^E \langle \bd{N}_1 ,\ldots , \bd{N}_n \rangle = ( \alpha_{\bot} )^I ( \bd{N}_1^{\bot} ,\ldots , \bd{N}_n^{\bot} )^{\sC} \cup \{ \maltese \}$ holds.

  (2.) Put $\bd{G} := [\bd{N}_1 / x_1 ,\ldots , \bd{N}_n / x_n]$.

  ($\subseteq$) Let $p \in V ( \bd{G} )$ be the case. If $p = \epsilon$ holds then the assertion is obvious, so let $p$ be non-empty. By definition, $p = \iseq{[G]}{P}$ holds for some $[G] = [N_1 / x_1 ,\ldots , N_n / x_n] \in \bd{G}$ and some c-design $P \in \bd{G}^{\bot}$. Fix an arbitrary $i$ with $1 \leq i \leq n$, and put $[G '] := [G] \setminus [N_i / x_i]$. By the definition of $\fk{Cut}$, we have $P[G '] \in \Cut{[G ']}{P}$ whether $\fv{P} = \{ x_1 , \ldots , x_n \}$ holds or $\fv{P} \subset \{ x_1 , \ldots , x_n \}$ holds. Then, by Proposition \ref{asspaths}, we have
  \[
  p \uhr [ N_i /x_i ] = \iseq{[G]}{P} \uhr [ N_i / x_i ] = \iseq{[N_i / x_i]}{ \otv{ \Cut{[G ']}{P} } } = \iseq{ [N_i / x_i] }{ \otv{ P[G'] } }.
  \]
  On the other hand, we have $\otv{ \otv{ P[G '] }[M / x_ i] } = \otv{ \otv{ P[G'] } [ \otv{M} / x_i ] } = \otv{ P[ G '][ M / x_i ] }= \maltese$ for any $M \in \bd{N}_i$ by Theorem \ref{ass}, hence $\otv{ P[G'] } \in \bd{N}_i^{\bot}$ holds. Moreover, the address of the first action of $p \uhr [ N_i /x_i ]$ is $x_i$ because $\iseq{[N_i / x_i]}{ \otv{ P[G'] } }$ is a path of $[N_i / x_i]$ by Lemma \ref{formainone}.(1). Therefore, $p \uhr [ N_i /x_i ] = \iseq{[N_i / x_i]}{ \otv{P[G'] } } \in V ( x_i , \bd{N}_i )$ holds for any $i$, so $p$ belongs to $V ( x_1 , \bd{N}_1 ) \shuffle \cdots \shuffle V ( x_n , \bd{N}_n )$.

  ($\supseteq$) Let $p \in V ( x_1 , \bd{N}_1 ) \shuffle \cdots \shuffle V ( x_n , \bd{N}_n )$ be the case, and assume that we have shown $\dual{p}^{\: c} \in \bd{G}^{\bot}$. By $p \in V ( x_1 , \bd{N}_1 ) \shuffle \cdots \shuffle V ( x_n , \bd{N}_n )$, $p$ is a path of $[ \vec{N} ]$ for some $[ \vec{N} ] \in \bd{G}$. Then, $p \in V ( \bd{G} )$ holds because one can show $p = \iseq{ [ \vec{N} ] }{ \dual{p}^{\: c} }$. Therefore, it suffices to verify that $\dual{p}^{\: c} \in \bd{G}^{\bot}$ holds.

  We suppose that $\dual{p}^{\: c} \not\in \bd{G}^{\bot}$ holds and deduce a contradiction. By $\dual{p}^{\: c} \not\in \bd{G}^{\bot}$, $[ \vec{N} ] \bot \dual{p}^{\: c}$ does not hold for some $[ \vec{N} ] = [ N_1 / x_1 ,\ldots , N_n / x_n ] \in \bd{G}$. The interaction of $[ \vec{N} ]$ and $\dual{p}^{\: c}$ cannot be infinite because $\dual{p}^{\: c}$ is a completion by means of $\maltese$ and $p$ is a shuffle of some visitable paths.
  By Lemma \ref{formainone}.(2), there are a path $t$ of $[\vec{N}]$ and a negative action $\kappa^-$ such that $\ov{ t \kappa^- }$ is a path of $\dual{p}^{\: c}$ and $t \kappa^-$ is not a path of $[ \vec{N} ]$. Therefore, there is a path $t$ satisfying the following property $( \ast )$: for some anti-design $[\vec{N}] \in \bd{G}$, (i) $[\vec{N}] \bot \dual{p}^{\: c}$ does not hold, (ii) $t$ is a path of $[\vec{N}]$ and (iii) for some negative action $\kappa^-$, $\ov{ t \kappa^- }$ is a path of $\dual{p}^{\: c}$ and $t \kappa^-$ is not a path of $[ \vec{N} ]$. Choose a minimal path $t$ with respect to length such that $t$ satisfies the property $( \ast )$. We show the following claims (a)--(d), and the claim (d) contradicts the property $( \ast )$ of $t$.
  \begin{enumerate}
  \item[(a)] $\dual{t}^{\; c} \in \bd{G}^{\bot}$ holds,

  \item[(b)] $t \in V ( x_1 , \bd{N}_1 ) \shuffle \cdots \shuffle V ( x_n , \bd{N}_n )$ holds,

  \item[(c)] for any $v \in V ( x_1 , \bd{N}_1 ) \shuffle \cdots \shuffle V ( x_n , \bd{N}_n )$ and any $\kappa^-_1$ such that $\ov{ v \kappa^-_1 }$ is a path of $\dual{p}^{\: c}$, we have $v \kappa^-_1 \maltese \in V ( x_1 , \bd{N}_1 ) \shuffle \cdots \shuffle V ( x_n , \bd{N}_n )$,

  \item[(d)] $t\kappa^-$ is a path of $[ \vec{N} ]$.
  \end{enumerate}

  (a) Suppose that $\dual{t}^{\; c} \not\in \bd{G}^{\bot}$ holds, hence $[ \vec{N}_1 ] \bot \dual{t}^{\; c}$ does not hold for some $[ \vec{N}_1 ] \in \bd{G}$. By Lemma \ref{formainone}.(2), there are a path $t_0$ and a negative action $\kappa^-_2$ such that $t_0$ is a path of $[ \vec{N}_1 ]$, $\ov{ t_0 \kappa^-_2 }$ is a path of $\dual{t}^{\; c}$ and $t_0 \kappa^-_2$ is not a path of $[ \vec{N}_1 ]$. One can see that $\ov{ t_0 \kappa^-_2 }$ is a path of $\dual{p}^{\: c}$, because views of $\dual{t}^{\; c}$ are views of $\dual{p}^{\: c}$. Then, $[ \vec{N}_1 ] \bot \dual{p}^{\: c}$ does not hold, otherwise $t_0 \kappa^-_2$ would be a prefix of $\iseq{[ \vec{N}_1 ]}{\dual{p}^{\: c}}$ and so $t_0 \kappa^-_2$ is a path of $[ \vec{N}_1 ]$ by Lemma \ref{formainone}.(1). Moreover, $t_0$ is strictly shorter than $t$ because $\ov{ t_0 \kappa^-_2 }$ is a path of $\dual{t}^{\; c}$. This contradicts the minimality of $t$ with respect to length.

  (b) We have $t = \iseq{ [ \vec{N} ] }{ \dual{t}^{\; c} }$ because $t$ is a path of $[ \vec{N} ]$ and $\dual{t}$ is a path of $\dual{t}^{\; c}$. Then, as in the proof of ($\subseteq$) above, we have $t \uhr [ N_i / x_i ] \in V ( x_i , \bd{N}_i )$ for any $i$ by the claim (a).

  (c) Let $v$ be an element of $V ( x_1 , \bd{N}_1 ) \shuffle \cdots \shuffle V ( x_n , \bd{N}_n )$, and $\kappa^-_1$ be a negative action such that $\ov{ v \kappa^-_1 }$ is a path of $\dual{p}^{\: c}$. By $p ,v \in V ( x_1 , \bd{N}_1 ) \shuffle \cdots \shuffle V ( x_n , \bd{N}_n )$, there are $p_1 , \ldots , p_n$ and $v_1 , \ldots , v_n$ such that $p \in p_1 \shuffle \cdots \shuffle p_n$ and $v \in v_1 \shuffle \cdots \shuffle v_n$ hold and we have $p_i , v_i \in V ( x_i , \bd{N}_i )$ for any $i$ with $1 \leq i \leq n$. Assume that $\kappa^-_1$ is an action in $p_j$. It suffices to show $v_j \kappa^-_1 \maltese \in V ( x , \bd{N}_j )$. By the definition of bi-views, we have $\langle v_j \kappa^-_1 \rangle = \langle v \kappa^-_1 \rangle$. Moreover, for the prefix $p' \kappa^-_1$ of $p$, $\langle v \kappa^-_1 \rangle = \langle p ' \kappa^-_1 \rangle$ holds because $\ov{ v \kappa^-_1 }$ is a path of $\dual{p}^{\: c}$. We have $\langle p ' \kappa^-_1 \rangle = \langle p_j ' \kappa^-_1 \rangle$ for the prefix $p_j ' \kappa^-_1$ of $p_j$ by the definition of bi-views again, hence $\langle v_j \kappa^-_1  \rangle = \langle p_j ' \kappa^-_1 \rangle$ holds. We have $p_j ' \kappa^-_1 \maltese \in V ( x_j , \bd{N}_j )$ by $p_j \in V ( x_j , \bd{N}_j )$ and Lemma \ref{formainone}.(3). Then, $\langle v_j \kappa^-_1 \maltese \rangle$ is a path of some $N_j  \in  \val{ \bd{N}_j }$ because $\langle v_j \kappa^-_1  \rangle = \langle p_j ' \kappa^-_1 \rangle$ holds, hence we have $\langle v_j \kappa^-_1 \maltese \rangle \in V ( x_j , \bd{N}_j )$ by the regularity of $\bd{N}_j$. Then, $v_j \kappa^-_1 \maltese \in v_j \shuffle \langle v_j \kappa^-_1 \maltese \rangle$ holds, so we have $v_j \kappa^-_1 \maltese \in V ( x_j , \bd{N}_j )$ by the closedness of $V ( x_j , \bd{N}_j )$ under $\shuffle$.

  (d) By the claim (b), we have $t \in V ( x_1 , \bd{N}_1 ) \shuffle \cdots \shuffle V ( x_n , \bd{N}_n )$. Moreover, $\ov{ t \kappa^- }$ is a path of $\dual{p}^{\: c}$ by the property $( \ast )$ of $t$, hence $t \kappa^- \maltese \in V ( x_1 , \bd{N}_1 ) \shuffle \cdots \shuffle V ( x_n , \bd{N}_n )$ holds by the claim (c). Then, we have $t_i \kappa^- \maltese \in V ( x_i , \bd{N}_i )$ for some $\bd{N}_i$ and some $t_i$. The sequence $t_i$ is a path of $N_i \in \bd{N}_i$, so $t_i \kappa^-$ is a prefix of a path of $N_i$. Therefore, $\view{ t\kappa^- }$ is a view of $N_i$ because we have $\view{ t\kappa^- } = \view{ t_i\kappa^- }$. The sequence $t$ is a path of $[ \vec{N} ]$, hence $t\kappa^-$ is a path of $[ \vec{N} ]$.
\end{proof}

By the lemma above, we have the following proposition:
\begin{proposition}\label{protwo}
  $(1)$ If a connective $\alpha$ is dually decomposable, then visitable paths of $\alpha$ are dually decomposable. $(2)$ if visitable paths of a connective $\alpha$ are dually decomposable, then $\alpha$ satisfies the harmony condition.
\end{proposition}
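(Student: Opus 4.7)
The plan for part (1) is to combine the dual decomposability of $\alpha$ with Lemma \ref{mainlem} and the associativity for paths (Proposition \ref{asspaths}). For the first clause of the dual decomposability of visitable paths, the hypothesis tells us that every standard c-design in $\alpha^E\langle\bd{N}_1,\ldots,\bd{N}_n\rangle$ is either $\maltese$ or of the form $P = x_0|\ov{a_i}\langle N_{(i,1)},\ldots,N_{(i,k)}\rangle$ with $a_i(\vec{x}_i) \in \alpha^I$ and $N_{(i,l)} \in \bd{N}_{(i,l)}$. Any non-trivial visitable path $\iseq{P}{[G]}$ therefore begins with the proper positive action $x_0|\ov{a_i}\langle\vec{x}_i\rangle$, and by Proposition \ref{asspaths} its tail can be identified with the interaction sequence $\iseq{[N_{(i,1)}/x_{(i,1)},\ldots,N_{(i,k)}/x_{(i,k)}]}{[G']}$ for the appropriate residual $[G']$ of $[G]$; Lemma \ref{mainlem}(2) then places this tail in $V(x_{(i,1)},\bd{N}_{(i,1)}) \shuffle \cdots \shuffle V(x_{(i,k)},\bd{N}_{(i,k)})$. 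Conversely, any sequence of this shape is realized by assembling $P$ and $[G]$ from the witnessing sub-designs promised by the shuffle decomposition. For the second clause, Lemma \ref{mainlem}(1) ensures that $\alpha_\bot$ is dually decomposable as well, and carrying out the same reasoning in the dual polarity gives $V(\alpha^I(\bd{P}_1,\ldots,\bd{P}_n)) = \{\epsilon\} \cup \bigcup_{a_i(\vec{x}_i) \in \alpha^E} a_i^{x_0}(\vec{x}_i) V([\bd{P}_{(i,1)}^\bot/x_{(i,1)},\ldots,\bd{P}_{(i,k)}^\bot/x_{(i,k)}]^\bot)$.

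For part (2), I will argue by contraposition. Suppose $\alpha^I \neq \alpha^E$ and fix $a_j(\vec{x}_j)$ in their symmetric difference. Pick any non-trivial negative regular behaviours $\bd{N}_1,\ldots,\bd{N}_n$ and set $\bd{P}_i := \bd{N}_i^\bot$. If $a_j(\vec{x}_j) \in \alpha^E \setminus \alpha^I$, I pick standard $N_{(j,l)} \in \bd{N}_{(j,l)}$ and form $P := x_0|\ov{a_j}\langle N_{(j,1)},\ldots,N_{(j,k)}\rangle$; then $P \in \ov{a_j}\langle\bd{N}_{(j,1)},\ldots,\bd{N}_{(j,k)}\rangle \subseteq \alpha^E\langle\bd{N}_1,\ldots,\bd{N}_n\rangle$, and taking $N := \sum_{a \in A} a(\vec{x}_a).\maltese$ shows that $[N/x_0]$ lies in the orthogonal, so that $\iseq{P}{[N/x_0]} = x_0|\ov{a_j}\langle\vec{x}_j\rangle$ is a visitable path whose head $a_j$ does not belong to $\alpha^I$, contradicting the first clause. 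Dually, if $a_j(\vec{x}_j) \in \alpha^I \setminus \alpha^E$, I take $N' := \sum_{\alpha^I} a(\vec{x}_a).\maltese \in \alpha^I(\bd{P}_1,\ldots,\bd{P}_n)$ and $P' := x_0|\ov{a_j}\langle M_{(j,1)},\ldots,M_{(j,k)}\rangle$ with $M_{(j,l)} \in \bd{P}_{(j,l)}^\bot$; since $\ov{a_j}\langle\bd{P}_{(j,1)}^\bot,\ldots,\bd{P}_{(j,k)}^\bot\rangle \subseteq \alpha^I(\bd{P}_1,\ldots,\bd{P}_n)^\bot$, the interaction $\iseq{N'}{P'}$ is a visitable path starting with $a_j^{x_0}(\vec{x}_j)$, and since $a_j \notin \alpha^E$ this contradicts the second clause.

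The main obstacle is the forward direction of the shuffle equality in part (1): one must disassemble a single interaction between a compound positive c-design and an arbitrary anti-design into a shuffled family of sub-interactions, each targeted at exactly one negative component $\bd{N}_{(i,l)}$. Proposition \ref{asspaths} carries out the essential rewriting, re-expressing the interaction with the joint multi-design as one with any selected sub-multi-design after cutting out the complement, and Lemma \ref{mainlem}(2) then packages the resulting collection of sub-visitable paths as a shuffle. The converse inclusion and all constructions in part (2) are then reasonably direct from the definitions, once one has fixed regular behaviours with non-empty orthogonals.
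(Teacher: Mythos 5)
Your proposal is correct. For part (1) it follows essentially the paper's route: clause 1 of dual decomposability forces every non-$\maltese$ design of $\alpha^E\langle \bd{N}_1,\ldots,\bd{N}_n\rangle$ to have head name in $\alpha^I$, the first action is peeled off the interaction sequence, and Lemma \ref{mainlem}.(2) converts the residual interaction into a shuffle; the second equation is obtained by dualising via Lemma \ref{mainlem}.(1). The only under-specified spot is the converse inclusion of the first equation: ``assembling $P$ and $[G]$'' must produce not just the design $Q = x_0\,|\,\ov{a_i}\langle N_1,\ldots,N_k\rangle$ but also a counter-design of the form $a_i(\vec{x}_i).P + \sum_{\beta}.\maltese$ with $\beta = \alpha^I\setminus\{a_i(\vec{x}_i)\}$, and one must check (using dual decomposability a second time) that this counter-design lies in $\alpha^E\langle\bd{N}_1,\ldots,\bd{N}_n\rangle^{\bot}$; the paper spells this out and you should too.

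For part (2) you take a genuinely different, and arguably more elementary, route. The paper starts from the right-hand side of the assumed equality: it picks a path whose head name is in the symmetric difference, infers that it must be realised by a design $Q$ in the relevant behaviour (or its orthogonal) with that head name, and then derives a contradiction by exhibiting a counter-design in the biorthogonal that fails to be orthogonal to $Q$. You instead work on the left-hand side: for $a_j(\vec{x}_j)\in\alpha^E\setminus\alpha^I$ you explicitly build $P\in\ov{a_j}\langle\bd{N}_{(j,1)},\ldots,\bd{N}_{(j,k)}\rangle\subseteq\alpha^E\langle\bd{N}_1,\ldots,\bd{N}_n\rangle$ and the universal counter-design $\sum_{a\in A}a(\vec{x}_a).\maltese$, producing a concrete visitable path $x_0\,|\,\ov{a_j}\langle\vec{x}_j\rangle$ whose head name cannot appear on the right-hand side (names in $\alpha^I\cup\alpha^E$ are pairwise distinct); the case $a_j(\vec{x}_j)\in\alpha^I\setminus\alpha^E$ is handled symmetrically with $\sum_{\alpha^I}a(\vec{x}_a).\maltese$ and the second clause. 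This avoids the paper's ``one can find $N$ such that $Q\perp N$ fails'' step, which is left implicit there, at the cost of having to verify membership of your witnesses in the behaviour and its orthogonal --- which you do correctly. Both arguments rely on the (unstated in either version) existence of at least one tuple of regular negative behaviours to instantiate the definition.
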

\begin{proof}
  (1.) First, we show
  \[
  V ( \alpha^E \langle \bd{N}_1 ,\ldots, \bd{N}_n \rangle )  =  \{ \maltese \} \cup \bigcup_{a_i ( \vec{x}_i ) \in \alpha^I} x_0 | \ov{a_i} \langle \vec{x}_i\rangle (  V ( x_{(i,1)} , \bd{N}_{(i,1)} ) \shuffle \cdots \shuffle V ( x_{(i,k)} , \bd{N}_{(i,k)} ) ).
  \]
  
  ($\subseteq$) Assume that $p \in V ( \alpha^E \langle \bd{N}_1 ,\ldots, \bd{N}_n \rangle )$ holds. If $p = \maltese$ holds then the assertion is trivial and so let $p \neq \maltese$ be the case. By the dual decomposability of $\alpha$, $p = \iseq{P}{N}$ holds for some $a_i (\vec{x}_i ) \in \alpha^I$, some $P = x_0 | \ov{a_i} \langle N_{(i,1)} , \ldots , N_{(i,k)} \rangle$ with $N_{(i,j)} \in \bd{N}_{(i,j)}$ for any $j$, and some $N = \sum b ( \vec{y}_b ).P_b \in \alpha^E \langle \bd{N}_1 ,\ldots, \bd{N}_n \rangle^{\bot}$. Therefore, by the definition of interaction sequences and the renaming of bound variables if necessary, $p$ is equal to $x_0 | \ov{a_i} \langle \vec{x}_{i}\rangle p'$ with $p' \in V ( [ \bd{N}_{(i,1)} / x_{(i,1)},\ldots, \bd{N}_{(i,k)} / x_{(i,k)} ] )$. By Lemma \ref{mainlem}.(2), we have
  \[
  p \in x_0 | \ov{a_i} \langle \vec{x}_i\rangle (  V ( x_{(i,1)} , \bd{N}_{(i,1)} ) \shuffle \cdots \shuffle V ( x_{(i,k)} , \bd{N}_{(i,k)} ) ).
  \]
  
  ($\supseteq$) The case of $\maltese$ is obvious.
  Assume that $p \in x_0 | \ov{a_i} \langle \vec{x}_i\rangle  ( V ( x_{i(1)} , \bd{N}_{i(1)} ) \shuffle \cdots \shuffle V ( x_{i(k)} , \bd{N}_{i(k)} )$ holds for some $a_i ( \vec{x}_i ) \in \alpha^I$. Then, we have $p = x_0 | \ov{a_i} \langle \vec{x}_i\rangle \iseq{[G]}{P}$ for some
  \[
  [G] = [N_1 / x_{(i,1)} , \ldots , N_k / x_{(i,k)} ] \in \bd{G} := [ \bd{N}_{(i,1)} / x_{(i,1)},\ldots, \bd{N}_{(i,k)} / x_{(i,k)} ]
  \]
  and some $P \in \bd{G}^{\bot}$ by Lemma \ref{mainlem}.(2). Define $Q := x_0 | \ov{a_i} \langle N_1 ,\ldots , N_k \rangle$, then $Q \in \alpha^I ( \bd{N}_1^{\bot} ,\ldots , \bd{N}_n^{\bot} )^{\sC}$ holds and so we have $Q \in \alpha^E \langle \bd{N}_1 ,\ldots , \bd{N}_n \rangle$ by the dual decomposability of $\alpha$. On the other hand, define $M := a_i ( \vec{x}_i ).P + \sum_{\beta}.\maltese$ with $\beta = \alpha^I \setminus \{ a_i ( \vec{x}_i ) \}$, then we have $M \in \alpha^E \langle \bd{N}_1 ,\ldots , \bd{N}_n \rangle^{\bot}$ by the dual decomposability again. By $p = \iseq{Q}{M}$, the assertion holds.

  Next, we show
  \[
  V ( \alpha^I ( \bd{P}_1 ,\ldots, \bd{P}_n ) )  =  \{ \epsilon \} \cup \bigcup_{a_i ( \vec{x}_i )\in \alpha^E} a^{x_0}_i ( \vec{x}_i ) V ( [ (\bd{P}_{(i,1)})^{\bot} / x_{(i,1)},\ldots, (\bd{P}_{(i,k)})^{\bot} / x_{(i,k)} ]^{\bot} ).
  \]

  ($\subseteq$) Assume that $p \in V ( \alpha^I ( \bd{P}_1 ,\ldots, \bd{P}_n ) )$ holds. If $p = \epsilon$ holds then the assertion obviously holds, so let $p$ be non-empty. By definition, we have $p = \iseq{N}{P}$ with $N \in \alpha^I ( \bd{P}_1 ,\ldots, \bd{P}_n )$ and $P \in \alpha^I ( \bd{P}_1 ,\ldots, \bd{P}_n )^{\bot}$. By the dual decomposability of $\alpha$, $N = \sum a ( \vec{y}_a ) . P_a$ holds and $P_{a_i} \in [\bd{P}_{(i,1)}^{\bot} / x_{(i,1)} ,\ldots ,\bd{P}_{(i,k)}^{\bot} / x_{(i,k)} ]^{\bot}$ holds for any $a_i (\vec{x}_i ) \in \alpha^E$. We have the following equation $( \ast )$
  \[
  \alpha^I ( \bd{P}_1 ,\ldots, \bd{P}_n )^{\bot}  =  ( \bigcup_{ a_i ( \vec{x}_i ) \in \alpha^I } \ov{a_i} \langle \bd{P}_{(i,1)}^{\bot} , \ldots , \bd{P}_{(i,k)}^{\bot} \rangle )^{\bot\bot}  =  (\alpha_{\bot})^E \langle \bd{P}_1^{\bot} ,\ldots, \bd{P}_n^{\bot} \rangle  =  (\alpha_{\bot} )^I ( \bd{P}_1 ,\ldots, \bd{P}_n )^{\sC} \cup \{ \maltese \}
  \]
  by the dual decomposability of $\alpha$ and Lemma \ref{mainlem}.(1), hence $P = x_0 | \ov{a_i} \langle N_1 ,\ldots , N_k \rangle$ holds for some $a_i ( \vec{x}_i ) \in \alpha^E$ and $N_j \in \bd{P}_{(i,j)}^{\bot}$ holds for any $j$. Therefore, $p \in a^{x_0}_i ( \vec{x}_i ) V ( [ (\bd{P}_{(i,1)})^{\bot} / x_{(i,1)},\ldots, (\bd{P}_{(i,k)})^{\bot} / x_{(i,k)} ]^{\bot} )$ holds by the definition of interaction sequences.

  ($\supseteq$) It suffices to consider the case of non-empty sequences. Assume that
  \[
  p \in \bigcup_{a_i ( \vec{x}_i  ) \in \alpha^E } a^{x_0}_i ( \vec{x}_i ) V ( [ (\bd{P}_{(i,1)})^{\bot} / x_{(i,1)},\ldots, (\bd{P}_{(i,k)})^{\bot} / x_{(i,k)} ]^{\bot} )
  \]
  holds and put $\bd{G} := [ (\bd{P}_{(i,1)})^{\bot} / x_{(i,1)},\ldots, (\bd{P}_{(i,k)})^{\bot} / x_{(i,k)} ]$. By definition, there are $P \in \bd{G}^{\bot}$ and
  \[
  [H] = [N_1 / x_{(i , 1)} ,\ldots , N_k / x_{(i , k)} ] \in \bd{G}
  \]
  such that $p = a^{x_0}_i ( \vec{x}_i ) \iseq{P}{[H]}$ and $a_i (\vec{x}_i ) \in \alpha^E$ hold. Define $N := a_i ( \vec{x}_i ) . P + \sum_{\beta} . \maltese$ with $\beta = \alpha^E \setminus \{ a_i (\vec{x}_i ) \}$ and $Q := x_0 | \ov{a_i} \langle N_1 ,\ldots , N_k \rangle$. By the dual decomposability of $\alpha$, we have $N \in \alpha^I ( \bd{P}_1 ,\ldots, \bd{P}_n )$. Moreover, by $Q \in (\alpha_{\bot} )^I ( \bd{P}_1 ,\ldots, \bd{P}_n )^{\sC}$, we have $Q \in \alpha^I ( \bd{P}_1 ,\ldots, \bd{P}_n )^{\bot}$ by the equation $( \ast )$ in the previous case. Therefore, we have $p \in V ( \alpha^I ( \bd{P}_1 ,\ldots, \bd{P}_n ) )$ because $p = \iseq{N}{Q}$ holds.

  (2.) Assume that $\alpha$ does not satisfy the harmony condition. We suppose that visitable paths of $\alpha$ is dually decomposable, and deduce a contradiction. If there is a negative action $a_i (x_i ) \in \alpha^I \setminus \alpha^E$, consider a path
  \[
  p = x_0 | \ov{a_i} \langle \vec{x}_i \rangle p' \in \bigcup_{a_j ( \vec{x}_j ) \in \alpha^I} x_0 | \ov{a_j} \langle \vec{x}_j\rangle ( V ( x_{(j,1)} , \bd{N}_{(j,1)} ) \shuffle \cdots \shuffle V ( x_{(j,k)} , \bd{N}_{(j,k)} ) ).
  \]
  We have $p \in V ( \alpha^E \langle \bd{N}_1 ,\ldots , \bd{N}_n \rangle )$, so $p = \iseq{ Q }{ M }$ holds for some $Q \in \alpha^E \langle  \bd{N}_1 ,\ldots , \bd{N}_n \rangle$. The c-design $Q$ is of the form $x_0 | \ov{a_i} \langle N_{(i,1)} , \ldots N_{(i,m)} \rangle$ with $a_i (x_i ) \in \alpha^I \setminus \alpha^E$ because $p$ is a path of $Q$, hence one can find
  \[
  N \in \alpha^E \langle \bd{N}_1 ,\ldots , \bd{N}_n \rangle^{\bot} = ( \bigcup_{ a_j (\vec{x}_j ) \in \alpha^E } \ov{a_j } \langle \bd{N}_{(j,1)} ,\ldots , \bd{N}_{(j,k)} \rangle )^{\bot}
  \]
  such that $Q \bot N$ does not hold. Contradiction.

  If there is a negative action $a_i (x_i ) \in \alpha^E \setminus \alpha^I$, consider a path
  \[
  p = a_i^{x_0} ( \vec{x}_i ) p' \in \bigcup_{a_j ( \vec{x}_j )\in \alpha^E} a^{x_0}_j ( \vec{x}_j ) V ( [ (\bd{P}_{(j,1)})^{\bot} / x_{(j,1)},\ldots, (\bd{P}_{(j,k)})^{\bot} / x_{(j,k)} ]^{\bot} ) .
  \]
  We have $p \in V ( \alpha^I ( \bd{P}_1 ,\ldots , \bd{P}_n ) )$, so $p = \iseq{ M }{ Q }$ holds for some $Q \in \alpha^I (  \bd{P}_1 ,\ldots , \bd{P}_n )^{\bot}$. The dual $\dual{p} = \iseq{Q}{M}$ is a path of $Q$, hence $Q$ is of the form $x_0 | \ov{a_i} \langle N_{(i,1)} , \ldots N_{(i,m)} \rangle$ with $a_i (x_i ) \in \alpha^E \setminus \alpha^I$. Therefore, one can find
  \[
  N \in \alpha^I (  \bd{P}_1 ,\ldots , \bd{P}_n )^{\bot\bot} =  \bigcap_{ a_j (\vec{x}_j ) \in \alpha^I }  \ov{a_j } \langle \bd{P}_{(j,1)}^{\bot} ,\ldots , \bd{P}_{(j,k)}^{\bot} \rangle^{\bot}
  \]
  such that $Q \bot N$ does not hold. Contradiction.
\end{proof}

Our two characterisations of the harmony condition are obtained by Propositions \ref{proone} and \ref{protwo}.
\begin{corollary}[Characterisation of Harmony]\label{main}
  Let $\alpha$ be a connective. The following three assertions are equivalent: $(1)$ $\alpha$ satisfies the harmony condition, $(2)$ $\alpha$ is dually decomposable and $(3)$ visitable paths of $\alpha$ is dually decomposable.
\end{corollary}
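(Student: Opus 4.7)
The plan is to deduce the three-way equivalence as an immediate consequence of Propositions \ref{proone} and \ref{protwo}, by chaining them into a cycle $(1) \Rightarrow (2) \Rightarrow (3) \Rightarrow (1)$. Concretely, I would invoke Proposition \ref{proone} for the implication $(1) \Rightarrow (2)$, Proposition \ref{protwo}.(1) for $(2) \Rightarrow (3)$, and Proposition \ref{protwo}.(2) for $(3) \Rightarrow (1)$. Since the three propositions have already established these implications individually, the corollary is obtained simply by composing them.

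Unpacking this cycle, the first step $(1) \Rightarrow (2)$ reduces, via the fact that a connective satisfying the harmony condition has $\alpha^I = \alpha^E$ and is therefore logical in the sense of \cite{terui2011,BST2010,BT2010}, to the internal completeness statements collected in Lemma \ref{internal}. The second step $(2) \Rightarrow (3)$ lifts dual decomposability from the level of behaviours to the level of visitable paths, using Lemma \ref{mainlem}.(2) to rewrite $V([\bd{N}_1/x_1,\ldots,\bd{N}_n/x_n])$ as a shuffle of the $V(x_i,\bd{N}_i)$ together with the equation $(\ast)$ from the proof of Proposition \ref{protwo}.(1). The third step $(3) \Rightarrow (1)$ is a contrapositive argument: if a negative action belongs to $\alpha^I \setminus \alpha^E$ or to $\alpha^E \setminus \alpha^I$, one exhibits a path in the right-hand side of the purported decomposition whose associated c-design admits a counter-test contradicting orthogonality.

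The corollary itself is therefore a simple diagram-chase; the real work has been carried out in the preceding propositions. The most delicate step conceptually is $(3) \Rightarrow (1)$, because it requires synthesising from path-level information a witness at the behaviour level, and this relies crucially on Lemma \ref{mainlem}.(1), which allows one to dualise the dual decomposability of $\alpha$ into that of $\alpha_\bot$. Once this is in hand, the corollary follows without further calculation, so my proof proposal is essentially the one-line composition: apply Proposition \ref{proone}, then Proposition \ref{protwo}.(1), then Proposition \ref{protwo}.(2), to close the cycle and conclude that (1), (2) and (3) are equivalent.
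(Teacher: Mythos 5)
Your proposal is correct and matches the paper exactly: the corollary is obtained by composing Proposition \ref{proone} for $(1)\Rightarrow(2)$, Proposition \ref{protwo}.(1) for $(2)\Rightarrow(3)$, and Proposition \ref{protwo}.(2) for $(3)\Rightarrow(1)$. The additional unpacking you give merely restates the content of those propositions, so nothing further is needed.
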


\section{Concluding Remarks and Future Work}
By means of Computational Ludics, we have first reformulated the inversion principle and the recovery principle into the harmony condition. Then, we have shown that the harmony condition is equivalent to both the dual decomposability of connectives and the dual decomposability of visitable paths.

However, a thorough analysis of the fundamental features of proof-theoretic semantics by means of the Computational Ludics tools is far from being definitely achieved. First, the proof-theoretic semantics literature has considered other principles such as \textit{deducibility of identicals} or the \textit{uniqueness} (see \cite{NP2015}) to capture the necessary condition that a set of rules has to satisfy to define a meaningful and logical connective. Examining how these principles can be reformulated in Computational Ludics would be a crucial step for future works. Second, as shown in \cite{BT2010}, in Computational Ludics it is possible to have a logical connective (i.e. a connective satisfying the harmony condition) for the non-linear case, which does not enjoy the internal completeness. To fully appreciate the relationship between the logicality and the internal completeness, we will explore the non-linear case.

Concerning the philosophical scope of our work, let us remark that, as we mentioned in the introduction, proof-theoretic semantics has been traditionally developed within the framework of natural deduction. However, as noted in \cite[\S~1.2]{SchroederHeister18}, natural deduction is somehow ``biased towards intuitionistic logic''. The possibility of associating each connective to a set of introduction rules and then justifying a corresponding set of elimination rules by means of detour reduction works straightforwardly when the intuitionistic rules are considered (on the contrary, the classical rule of \textit{reductio ad absurdum}, or of indirect proof, cannot be easily classified as an introduction rule nor as an elimination rules, and this makes it difficult  to define a suitable notion of detour for it; see \cite{GuerrieriNaibo20}). A monistic point of view is thus often associated with proof-theoretic semantics, according to which intuitionistic logic is the only right and meaningful logic.
The analysis of harmony that we have offered here aims to show that when the notion of proof is formalised within a framework different from natural deduction, then other connectives---different from the intuitionistic ones---can be justified. We took here Computational Ludics as an alternative framework to natural deduction, and we showed that this choice allows for the justification of linear connectives. In this sense, our work can eventually be seen as a contribution to the idea that proof-theoretic semantics is compatible with a \textit{pluralistic} rather than a monistic view of logic. We also claim that our use of Computational Ludics as an alternative framework to natural deduction is legitimised by the fact that it allows us to obtain a more perspicuous formulation of harmony than the one that is usually proposed in the case of (intuitionistic) natural deduction.

\nocite{*}
\bibliographystyle{eptcs}
\bibliography{harmonyludics}

\end{document}